\newcommand{\xmark}{\ding{55}}
\newtheorem{defn}{Definition}
\newtheorem{thm}{Theorem}
\newtheorem{prop}{Proposition}
\newtheorem{assumption}{Assumption}
\newcommand{\rref}[2][]{\prettyref{#2}}
\newcommand{\inv}{{-1}}
\newcommand{\bs}{{\mathbf{s}}}
\newcommand{\bv}{{\mathbf{v}}}
\newcommand{\bw}{{\mathbf{w}}}
\newcommand{\bx}{{\mathbf{x}}}
\newcommand{\by}{{\mathbf{y}}}
\newcommand{\be}{{\mathbf{e}}}
\newcommand{\bz}{{\mathbf{z}}}
\newcommand{\bX}{{\mathbf{X}}}
\newcommand{\hbX}{\widehat{\bX}}
\newcommand{\hbx}{\widehat{\bx}} 
\newcommand{\hbeta}{\widehat{\beta}}
\newcommand{\SIA}{\texttt{SI-A}}
\newcommand{\SIC}{\texttt{SI-C}}
\newcommand{\cA}{\mathcal{A}}
\newcommand{\cC}{\mathcal{C}}
\newcommand{\cX}{\mathcal{X}}
\newcommand{\bbR}{\mathbb{R}}
\newcommand{\bbP}{\mathbb{P}}
\DeclareMathOperator{\pa}{pa}
\newcommand{\train}{\text{train}}
\newcommand{\test}{\text{test}}
\newcommand{\avgc}{{\textrm{avg-c}}}
\newcommand{\avga}{{\textrm{avg-a}}}
\newcommand{\twoway}{{\textrm{two-way}}}
\newcommand{\rank}{{\textrm{rank}}}
\newcommand{\fae}{{\textrm{fae}}}
\newcommand{\spann}{{\textrm{span}}}
\newcommand{\donor}{{\textrm{donor}}}
\newcommand{\target}{{\textrm{target}}}
\newcommand{\ME}{{\texttt{ME}}}
\newcommand{\IQR}{{\textrm{IQR}}}
\title[Causal Imputation via Synthetic Interventions]{Causal Imputation via Synthetic Interventions}
\begin{document}

\maketitle

\begin{abstract}
Consider the problem of determining the effect of a compound on a specific cell type. To answer this question, researchers traditionally need to run an experiment applying the drug of interest to that cell type. 
This approach is not scalable: given a large number of different actions (compounds) and a large number of different contexts (cell types), it is infeasible to run an experiment for every action-context pair. 
In such cases, one would ideally like to predict the outcome for every pair while only needing outcome data for a small \textit{subset} of pairs. 
This task, which we label \textit{causal imputation}, is a generalization of the causal transportability problem. 
To address this challenge, we extend the recently introduced \textit{synthetic interventions} (SI) estimator to handle more general data sparsity patterns.  
We prove that, under a latent factor model, our estimator provides valid estimates for the causal imputation task.
We motivate this model by establishing a connection to the linear structural causal model literature.  
Finally, we consider the prominent CMAP dataset in predicting the effects of compounds on gene expression across cell types. 
We find that our estimator outperforms standard baselines, thus confirming its utility in biological applications.
All code can be found at \url{https://github.com/uhlerlab/causal-imputation-synthetic-interventions}.
\end{abstract}
\begin{keywords}%
Latent factor model, imputation, causal inference
\end{keywords}
\renewcommand{\thefootnote}{\fnsymbol{footnote}}
\footnotetext[1]{Equal contribution}
\renewcommand{\thefootnote}{\arabic{footnote}}

\section{Introduction}\label{sec:introduction}
A central goal in science is to determine the outcome of an action (i.e., intervention). 
Traditionally, researchers rely on experimentation to answer such questions. 
However, this approach is rarely scalable and/or can be unethical in many critical applications, e.g., clinical trials and policy evaluation.
As a result, there is a growing interest in utilizing already-existing data to help predict the effect of every candidate action, prior to deciding which action to execute. 
The key challenge underpinning this approach is that the available data regarding the effect of an action almost always comes from a {\em different context} than the one in which we are interested in making a prediction. 
To further motivate our problem setting, we provide an example from healthcare below.
%


{\em A Motivating Example.} 
Consider a prevalent medical scenario in which physicians aim to re-purpose existing drugs to treat novel  diseases such as COVID-19. 
%
Although there are over 20,000 small molecule compounds (actions) with known therapeutic capabilities, exposing COVID-19 infected cells (new contexts) to each compound would be far too time-consuming and costly.
Fortunately, the publicly available CMAP dataset \citep{subramanian2017next} has already catalogued the effects of these compounds on a variety of other cell types (different contexts). 
In such a setting, the primary intellectual challenge is to develop a methodology which uses the CMAP dataset to predict the outcome of applying each compound on COVID-19 infected cells.
Such a method leads to efficient \textit{hypothesis generation} for finding compounds with therapeutic potential for this disease.

{\em Key Question.} 
The focus of this work is to tackle problems such as the one described above. 
More generally, we aim to answer the following question: 

\begin{center}
    {\em ``Given a dataset of outcomes from various action-context pairs, can we predict the outcome of a given action on a context in which it has not been applied?''}
\end{center}

%
%

{\bf Our Contributions.} 
The main contributions of this work are three-fold: algorithmic, theoretical, and empirical. We summarize them below.  

{\em (i) Algorithmic.} 
We extend the recently proposed synthetic interventions (SI) estimator \citep{agarwal2020synthetic} to handle more general sparsity patterns.
At its core, the SI estimator produces a counterfactual prediction for any ``target'' (novel) context-action pair by first constructing a ``synthetic'' version of the target context as a weighted combination of ``donor'' (different) contexts via (regularized) regression. Then, it uses the learnt model to re-scale the outcomes of those donor contexts under the target action to estimate the outcomes of the target context-action pair. 
Taking inspiration from biological applications, we adapt the SI estimator to instead construct a synthetic version of the target \textit{action} as a weighted combination of donor actions. 
Given the more general observation patterns, we also equip the SI estimator to use \textit{all} available data when learning the regression model.
As a result, the SI estimator of \cite{agarwal2020synthetic} becomes a special case of our formulation.  
Moving forward, we refer to our estimator as \SIA. 
For details, please refer to \rref{sec:synthetic-interventions}.

{\em (ii) Theoretical.} 
We justify our algorithmic approach by presenting an ``identification'' result in Theorem \ref{thm:main}.
More formally, we establish that our estimator yields the correct causal estimates for the causal imputation task, under a factor model across contexts and actions. 
Further, we re-interpret the SI causal framework by connecting it to linear structural causal models (see Proposition \ref{prop:lsm_fm}).

{\em (iii) Empirical.} 
Using the prominent CMAP dataset, we extensively benchmark our approach against well-established baselines. 
The CMAP dataset contains gene expression signatures for over 20,000 different small molecule compounds (actions) across 70 different cell types (contexts).
Using a small fraction of the observed signatures as a training set, we consider the task of imputing the signatures associated with a held-out test set that emulates an un-experimented collection of novel cell type-compound pairs. 
Empirically, we find that our estimator outperforms the baselines.
In particular, the median normalized root mean-square error (NRMSE) of \SIA~is 0.34, compared to 0.41 for the closest baseline.
Similarly, \SIA~improves the \textit{alignment} of the predicted effect of the action with the true effect of the action, measured in terms of cosine similarity, improving from 0.44 in the best baseline to 0.68.
Moreover, our experiments show that our variant of the SI estimator (which regresses along compounds) significantly outperforms the original SI estimator (which regresses along cell types), a result that may be of independent biological interest.

{\bf Organization of the paper.} 
In \rref{sec:related}, we review related work on causal imputation. 
Then, in Section \ref{sec:problem_statement}, we formally state our problem setting. 
In \rref{sec:synthetic-interventions}, we describe our estimation strategy, which extends the original SI estimator. 
We state our formal results in Section \ref{sec:mainresults}, establishing an identification result under a factor model assumption, as well as providing an interpretation of this assumption in terms of a linear structural causal model.
%
Finally, \rref{sec:empirical} showcases our empirical findings on the CMAP dataset.

\section{Related Work}\label{sec:related}

\begin{table}[t]
    \small
    \begin{center}
    \begin{tabular}{@{}lcccc@{}}
    \toprule
     & \textbf{Fast} & \textbf{Heterogeneous Effects} & \textbf{Guarantees} & \textbf{Needs selection diagram}
     \\
     \midrule
     Fixed Effects (FE) & \checkmark & \xmark & \checkmark & \xmark
     \\ 
     Autoencoding + FE & \checkmark & \checkmark & \xmark & \xmark
     \\
     Causal Transportability & \checkmark & \checkmark & \checkmark & \checkmark
     \\
     MICE/MissForest & \xmark & \checkmark & \xmark & \xmark
     \\
     Synthetic Interventions & \checkmark & \checkmark & \checkmark & \xmark
     \\
     \bottomrule
    \end{tabular}
    \end{center}
    \caption{\small{Summary of Causal Imputation Methods}}
    \label{tab:prior-work}
\end{table}

Predicting the (vector-valued) outcome of an action in a novel context, a task we call \textit{causal imputation}, is a ubiquitous problem, and there are numerous lines of study tackling this challenge. 
We highlight some of the most relevant below.
In practice, we require a method that (i) is \textit{fast}, (ii) can model \textit{heterogenous} effects across different cell types, (iii) comes with theoretical \textit{guarantees}, and (iv) does \textit{not} need a detailed model of the causal system as input.
Existing methods, which we describe below, all lack one of these 4 desiderata, as summarized in \rref{tab:prior-work}.

%

\textbf{Fixed effects.} The simplest model for the effect of actions across different contexts is the \textit{fixed effects} model, which associates each action with a vector representing the additive effect of that action, assumed to be {\em invariant} across contexts.
We note that the fixed effects estimator is a special case of the regularized linear model proposed in \citet{dixit2016perturb}. 
Their method makes predictions of the form $\widehat{\bx}^{ca} = \beta^a + \bw^c \beta^c$, where $\widehat{\bx}^{ca}$ denotes the \textit{outcome} vector associated with cell type $c$ under action $a$ and $\bw^c$ is a set of covariates for the cell type $c$.
When no covariates are available except a one-hot encoding of the cell type, their estimator reduces to the fixed-effects model $\widehat{\bx}^{ca} = \beta^a + \beta^c$.
Under a fixed effects model, the effect of an action in any new context can be predicted based solely on the effect of that action in a \textit{single} other context, greatly reducing the number of required experiments.
However, the fixed effects model is not realistic in our application of interest, as it does \textit{not} allow for heterogeneity, even though the effect of a drug on the expression of a given gene can vary drastically between cell types \citep{kidd2016mapping}, e.g., increasing the expression of a gene in one cell type while decreasing it in a different cell type.

To overcome this limitation of the fixed effects model, \citet{lotfollahi2019scgen} introduced the \textit{scGen} method. 
This method first trains a variational autoencoder (VAE) in order to find a low-dimensional representation (latent embedding) of gene expression data, then uses a fixed effects model in the \textit{latent} space of the VAE.
%
%
Similarly, one could instead apply our method, \SIA, in the \textit{latent} space.
The theoretical study of combining autoencoders, or other nonlinear embeddings, with ``direct" causal imputation methods such as the one considered here, is an interesting direction for future work.
To the best of our knowledge, no theoretical guarantees yet exist for such combinations, but promisingly, recent work using autoencoders to predict perturbation effects on SARS-CoV2 infected cells empirically demonstrated that autoencoders \textit{align} representations in a way that {induces linearity amongst the perturbations and cell types in the learned space} \citep{belyaeva2020causal}.

\textbf{Causal transport.} 
{The goal in causal transport is to find a \textit{transport formula} \citep{bareinboim2014transportability,bareinboim2016causal}, i.e., a map from the outcomes of actions in a set of ``source" contexts to the outcome in a ``target" context under some action.
%
Whereas causal transport focuses on prediction from data generated in \textit{different} contexts than the target context, causal imputation generalizes this problem to \textit{also} use data about \textit{different} actions, including from the target context itself.
%
Previous methods algorithmically derive transport formulas from a \textit{selection diagram}, i.e., a causal model of the system that introduces additional nodes corresponding to contexts, allowing one to model how causal mechanisms might change between contexts.
See \cite{lee2020generalized} for a recent exposition on selection diagrams and algorithms for deriving transport formulas.
%
%
A major contrast between the present work on causal imputation and prior work on causal transport is that \SIA~does \textit{not} rely on specifying (or learning) a selection diagram.}
Indeed, as we show in \rref{prop:lsm_fm} and \rref{thm:main}, \SIA~is consistent for \textit{any} linear causal structural model within a large class.
%
%
%
A further advantage of the factor model framework considered in this work is that the key assumption (\rref{assumption:span_action}) that enables \SIA~to generalize from a small collection of experiments to predicting on untested context-action pairs can be tested via a simple data-driven hypothesis test (see \rref{sec:synthetic-interventions}).

\textbf{Traditional imputation methods.} As missing values are a common issue in many applications \citep{bell2007lessons,troyanskaya2001missing}, a number of imputation methods have been developed to fill in missing entries, see e.g. \cite{bertsimas2017predictive} for a recent review.
Two prominent methods which are widely used in imputing genomic and other biological data \citep{tan2017dynamic,waljee2013comparison} are MICE \citep{van1999flexible} and MissForest \citep{stekhoven2012missforest}.
To the best of our knowledge, these methods have no formal identification guarantees.
Moreover, they are several orders of magnitude slower than our \SIA~estimator - see \rref{app:mice} for a comparison to MICE - making them impractical for the problem considered here.

The linear factor model assumption (\rref{defn:factor_model}) used by our model is similar to factor model assumptions made by methods for low-rank tensor decomposition, such as PARAFAC \cite{harshman1970foundations,bro1997parafac}.
As in the matrix case \citep{chatterjee2015matrix,agarwal2018model,sportisse2020estimation}, such decompositions can be useful tools for tensor estimation \citep{jain2014provable}, especially in the missing-at-random (MAR) or missing-completely-at-random (MCAR) settings \citep{rubin1976inference}.
However, PARAFAC assumes a low-rank \textit{CP decomposition}, which is more restrictive than the low-rank \textit{mode-2 decomposition} in \rref{defn:factor_model}. 
Furthermore, PARAFAC requires a costly alternating minimization scheme to optimize, so that it is again several orders of magnitude slower than our \SIA~estimator.

\textbf{Methods using other data modalities.} 
The imputation methods we have thus far discussed all begin with genomic data from various compounds and cell types, and impute genomic data for \textit{novel} pairs of compounds and cell types. 
Orthogonally, there is a vast literature on predicting perturbation response using \textit{different} data modalities, e.g. images \citep{hofmarcher2019accurate,yang2018characterizing} or molecular structure \citep{stokes2020deep}.
There are many possible avenues for combining these approaches with our method, e.g., using coupled autoencoders~\citep{yang2021multi} to represent each compound in terms of a combination of other compounds \textit{and} an encoding of its molecular structure. 
\section{Problem Statement} \label{sec:problem_statement} 
We consider a collection of contexts (e.g., cell types), denoted by $\cC$, and actions (e.g., compounds), denoted
by $\cA$. The outcomes of interest (e.g., gene expression signatures) associated with a given context $c \in \cC$ 
under action $a \in \cA$ are denoted by $\bx^{ca} \in \bbR^p$. Collectively, this forms an order-three
tensor $\mathcal{X} = [\bx^{ca}: c \in {\cC}, a \in {\cA}] \in \bbR^{|{\cC}| \times |{\cA}| \times p}$. 


{\bf Observations.} To model real-world scenarios, we consider the setup where we not only have access to a sparse subset of the entries in $\mathcal{X}$ (e.g., corresponding to a limited number of historical experiments), but also noisy instantiations of those observations. 
More formally, let $\tilde{\bx}^{ca} \in \bbR^p$ denote a corrupted version of $\bx^{ca}$, e.g., $\tilde{\bx}^{ca} = \bx^{ca} + \be^{ca}$ or $\tilde{\bx}^{ca} = \bx^{ca} \circ \be^{ca}$, where $\be^{ca}$ denotes measurement noise and $\circ$ is the entry-wise product. 
Further, let $\Omega \subset \cC \times \cA$ denote the context-action pairs for which
we observe their associated outcomes, i.e., for each $(c, a) \in \Omega$, we observe $\tilde{\bx}^{ca}$. 
We denote our observation set as $\mathcal{O} = \{ \tilde{\bx}^{ca}: (c,a) \in \Omega\}$. 

{\bf Goal.} Our primary interest is in recovering $\mathcal{X}$ from $\mathcal{O}$.

{\em Notations.}
For any $a \in \cA$ and $c\in \cC$, let 
$\cA(c) = \{a \in \cA: (c, a) \in \Omega \}$ and $\cC(a) = \{c \in \cC: (c, a) \in \Omega \}$ denote the set of actions which are measured for context $c$, and the set of contexts which are measured for action $a$, respectively.
These notations are extended to sets of contexts and actions, i.e., $\cA(\cC) = \cap_{c \in \cC} \cA(c)$ and $\cC(\cA) = \cap_{a \in \cA} \cC(a)$.

\section{Algorithm}\label{sec:synthetic-interventions}

We propose \SIA, which is an extension of the SI estimator introduced in \cite{agarwal2020synthetic}, to handle more general sparsity patterns.
%
Though our estimation strategy is applicable for any context-action pair of interest, we consider (without loss of generality) some $(c,a) \notin \Omega$. 
%
%
Below, we introduce additional notations to describe our estimator. 

{\em Additional Notations.}
We call $\cA(c)$ the ``donor actions" for context $c$.
Let $\cC_{\train} = \cC(\cA(c) \cup \{a\})$ denote the set of contexts for which all donor actions and the target action are observed, we call these the ``training contexts''.
Let $C = |\cC_{\train}|$ and $A = \cA(c)$, and define $
\tilde{\bx}_{\train,\target} \in \bbR^{p \cdot C},
\tilde{\bX}_{\train,\donor} \in \bbR^{pC \times A}$
and
$\tilde{\bX}_{\test,\donor} \in \bbR^{p \times A}$ as
\[
\tilde{\bx}_{\train,\target} = [\tilde{\bx}^{ia}]_{i \in \cC_\train}
\quad\quad
\tilde{\bX}_{\train,\donor} = [\tilde{\bx}^{ij}]_{i \in \cC_\train, j \in \cA(c)}
\quad\quad
\tilde{\bX}_{\test,\donor} = [\tilde{\bx}^{cj}]_{j \in \cA(c)}
\]

Finally, let $\ME: \bbR^{n \times p} \to \bbR^{n \times p}$ denote any matrix estimation method, used to recover a matrix from noisy observations, such as nuclear norm regularization or singular value thresholding.

\begin{figure*}[t]
    \centering
    \includegraphics[width=0.9\textwidth]{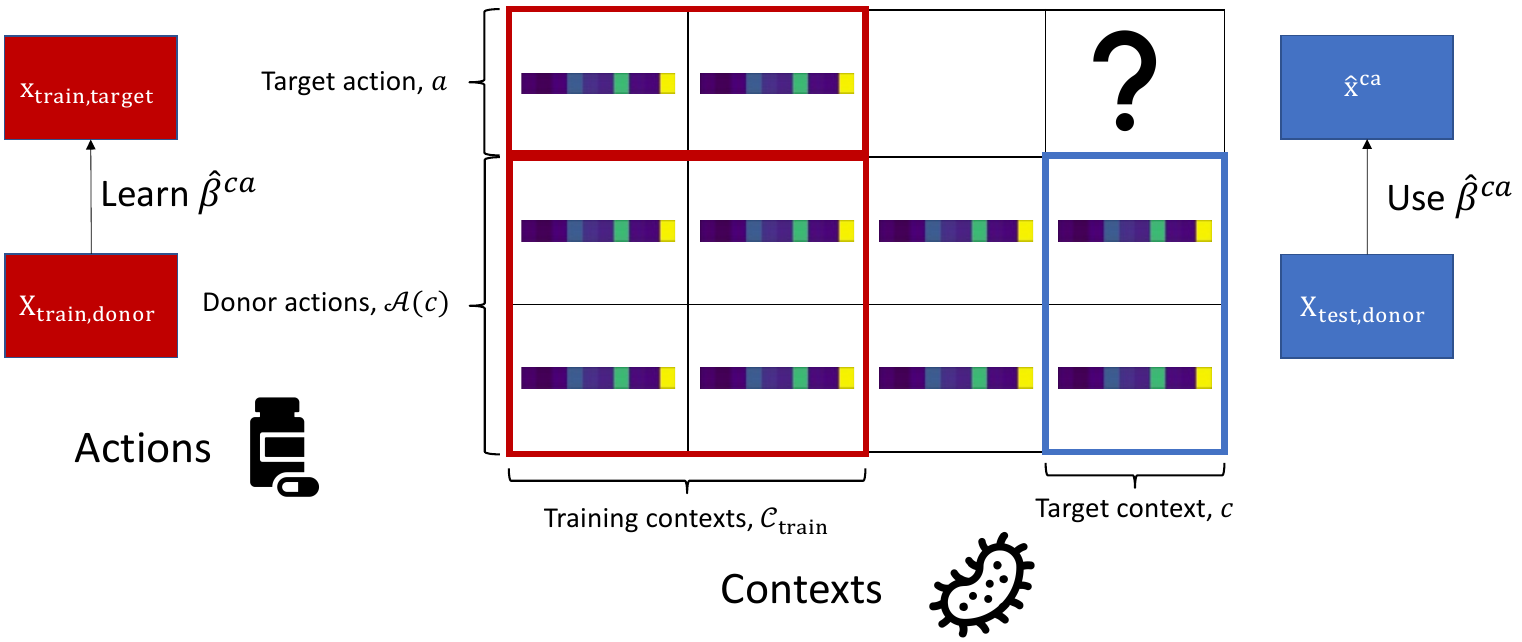}
    \caption{
    \textbf{The SI-Action Method.} 
    To form an estimate $\hat{x}^{ca}$ of the outcome of target action $a$ in target context $c$, we find a set of \textit{donor actions} $\cA(c)$ which are available in context $c$.
    Then, we find \textit{training contexts} $\cC_\train$ for which both the donor actions and the target action are available, and learn weights $\hat{\beta}^{ca}$ via linear regression.
    Finally, we use these weights to form $\hat{x}^{ca}$.
    }
    \label{fig:si-action}
\end{figure*}

{\bf Causal imputation for a novel context-action pair.} 
We define the our estimation method as follows, where $\dagger$ denotes the pseudoinverse.
The method is also presented visually in \rref{fig:si-action}. 
\begin{enumerate}

    \item Model learning via regression 
    \begin{enumerate} 
    
    \item Define $\hbX_{\train,\donor} = \ME(\tilde{\bX}_{\train,\donor})$. 
    
    \item Define $\hbeta^{ca} = \hbX_{\train,\donor}^{\dagger} \tilde{\bx}_{\train,\target} \in \bbR^{|\cA(c)|}$
    \end{enumerate} 
    
    \item Causal imputation
    \begin{enumerate} 
    \item Define $\hbX_{\test,\donor} = \ME(\tilde{\bX}_{\test,\donor})$. 

    \item Define 
    $\hbx^{ca} = \hbX_{\test,\donor} \hbeta^{ca}$. 
    \end{enumerate} 
\end{enumerate}

In \rref{app:si-comparison}, we discuss how our estimator relates to the original SI estimator.

\textbf{Extensions.} We now discuss two natural modifications to the proposed method.

{\em Reversing the axis of regression.} The method we described learns weights $\hbeta^{ca}$ which express the action $a$ as a linear combination of other actions $\cA(c)$.
The same method can be applied to instead express the context $c$ as a linear combination of other contexts, by making the obvious replacements.

{\em Using both estimators in tandem.}
The two estimation strategies described above---weighting actions and weighting contexts---can be used in a complementary fashion.
For instance, say we apply the weighting actions estimator first. 
The imputed values increase the set of donor actions, which should lead to better outcomes when applying the weighting contexts estimator.
An analogous argument can be made when the weighting contexts estimator is applied first. 
This suggests an algorithm that iteratively applies each approach until the imputed tensor converges to a desired threshold. 
However, one drawback of this approach is that it is computationally more demanding. 
Formally motivating and analyzing such an algorithm remains to be interesting future work.

{\em Arbitrary sets of donor actions.}
For simplicity, we have written our estimator using \textit{all} actions $\cA(c)$ that are available for context $c$ as ``donor" actions. 
The corresponding set of training contexts are those for which each of these actions, along with the target action, is measured, i.e., $\cC_\train = \cC(\cA(c) \cup \{ a \})$.
Expressing $\widehat{\beta}$ as a linear combination of all actions $\cA(c)$ may be suboptimal if it leads to a significantly smaller set $\cC_\train$. 
For example, say $|\cA(c)| = 11$, and there is only one training context containing all of these actions, i.e., $|\cC(\cA(c) \cup \{ a \})| = 1$. 
Suppose there is a subset $\cA' \subset \cA(c)$, with $|\cA'| = 10$ with $|\cC(\cA' \cup \{ a \})| = 100$, i.e., by excluding a single action from the linear combination, we can train on 100 times more contexts.
Thus, we may consider the SI estimator induced by a specific \textit{donor set} $\cA_\donor \subseteq \cA(c)$, with the corresponding set of training contexts being $\cC_\train = \cC(\cA_\donor \cup \{ a \} )$.
The choice of donor set introduces a tradeoff: as the number of donor actions increases, the number of training contexts might decrease. 
This raises the question of how to pick an optimal set of donor actions for a given prediction.
We provide a principled approach to this problem in \rref{sec:mainresults}.





\section{Theoretical Results} \label{sec:mainresults} 

In this section, we justify our algorithmic approach under a factor model and provide a justification for the factor model through the lens of causal structural models. 
We focus on the noiseless setting, i.e., for all $(c,a) \in \Omega$, we observe $\tilde{\bx}^{ca} = \bx^{ca}$, and thus, we may write $\bX_{\train,\donor} = \hbX_{\train,\donor} = \tilde{\bX}_{\train,\donor}$ and $\bX_{\test,\donor} = \hbX_{\test,\donor} = \tilde{\bX}_{\test,\donor}$.
Hence, our results should be viewed as one of ``identification'' as we prove that $\mathcal{X}$ can be identified from noiseless data via our proposed algorithm.
Indeed, identification arguments are considered a critical first step in causal inference \citep{shpitser2008complete,lee2020general}, as they are necessary for any analysis of consistency in the noisy setting.
Two noisy settings are of special interest.
First, if only a single observation for each $(c, a)$ pair is available, then there is a rich body of literature that establishes formal guarantees for recovering $\bx^{ca}$ for all $(c, a) \in \Omega$ using a suite of matrix estimation methods (see \cite{chatterjee2015matrix} and references therein). 
Second, if multiple observations are available for each $(c, a)$ pair, then we may average these observations, as we do in \rref{sec:empirical}.
Under appropriate conditions on the noise $\be^{c,a}$, such as bounded variance in the additive case, this averaging yields a consistent estimator of $\bx^{ca}$.
Formally building upon these results to provide rigorous sensitivity analyses remains future work. 


\subsection{Identification via \SIA}
In what follows, we state our key modeling assumptions that will enable our estimator to impute $\cX$. 
To reduce redundancy, the following discussion will be restricted to \SIA, i.e., where we learn a regression model across actions. 
Analogous statements apply to the case of learning across contexts.

First, we recall the definition of a linear factor model, also known as the interactive fixed effects model, which is prevalent within the causal inference literature \citep{agarwal2020synthetic, abadie2, bai09, athey2021matrix} and critical to the SI framework of \cite{agarwal2020synthetic}. 

\begin{defn} \label{defn:factor_model}
We say that $\cX$ satisfies a \emph{linear factor model} if, for any $a \in \cA$ and $c \in \cC$, 
    $\bx^{ca}  = U^c \bv^a,$
where $U^c \in \bbR^{p \times r}$ and $\bv^a$ are latent factors associated with the context $c$ and action $a$, respectively.
\end{defn}
%

{\em Interpretation.} \rref{defn:factor_model} posits that $\cX$ satisfies a low-rank assumption \citep{kolda2009tensor}\footnote{Formally, $\cX = U \times_2 V$ for some $U \in \bbR^{|\cC| \times r \times p}, V \in \bbR^{|\cA| \times r}$, where $\times_2$ denotes the \textit{mode-2} product.}. 
While \textit{exact} adherence to a linear factor model can be a strong assumption, it has been widely observed in practice that many big-data matrices are approximately low-rank.
This phenomenon that has recently been motivated by generic latent-variable models \citep{udell2019big} and theoretically established in several works \citep{chatterjee2015matrix, agarwal2019robustness}.
%
Since big data matrices with underlying structure are common in biological applications  (due, for example, to similarities between cell types and drugs, and interdependence of gene expression values), these works suggest linear factor models as a natural starting point for developing principled causal imputation methods.
Further, low-rank approximation can often be empirically verified by inspecting the spectrum of the observations, as we do in Section \ref{sec:empirical} (see Figure \ref{fig:spectra}). 
\begin{assumption} \label{assumption:latent_factors_action}
Given a target context $c$ and target action $a$, 
there exists $\beta \in \bbR^{|\cA(c)|}$ such that $\bv^a = \sum_{j \in \cA(c)} \beta_{j} \bv^{j}$. 
\end{assumption}

{\em Interpretation.} Under the linear factor model, this is a mild assumption, particularly in applications such as genomics, where outcomes are measured for many different actions and contexts. 
By definition, such a $\beta$ exists if $\bv^a$ and $[\bv^{j}]_{j \in \cA(c)}$ are linearly dependent. 
Recall $\bv^a$ and $\bv^{j}$ are in $\bbR^r$.
If $r \ll |\cA(c)|$, then by the definition of rank, it is easy to see that the `pathological' case where $\bv^a$ is linearly independent of $\{\bv^{j}: j \in \cA(c)\}$ is unlikely to hold;
that is, in the worst case, this undesirable event occurs for at most $r$ actions out of all possible $\cA$, which is a small fraction if 
$r \ll |\cA(c)| \leq |\cA|$. 

\begin{assumption} \label{assumption:span_action} 
Given a target context $c$ and target action $a$, let 
$\emph{rowspan}(\bX_{\test,\donor})$ be a subset of $\emph{rowspan}(\bX_{\train,\donor})$,\footnote{The row span of a matrix $M \in \bbR^{m \times n}$ is the 
linear subspace of $\bbR^n$ spanned by the row vectors of $M$.} 
where $\bX_{\test,\donor}$ and $\bX_{\train,\donor}$ are defined in Section \ref{sec:synthetic-interventions}. 
\end{assumption}
{\em Interpretation.} This assumption states that the target context is no more ``complex'' than the training context in a linear algebraic sense.
This is the key assumption that enables \SIA~(and thus the original SI estimator) to generalize from a small set of experiments to novel context-actions pairs.
%
%
\begin{thm}\label{thm:main}
Suppose $\cX$ satisfies a linear factor model.
Further, for $(c,a) \notin \Omega$, suppose Assumptions \ref{assumption:latent_factors_action} and \ref{assumption:span_action} hold. 
Then $\widehat{\bx}^{ca} = \bx^{ca}$. 
\end{thm}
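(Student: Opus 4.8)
The plan is to exploit the factor model to reduce the theorem to a purely linear-algebraic statement about pseudoinverses, and then to show that \rref{assumption:span_action} is exactly the condition that forces the regression residual to be annihilated by the test matrix. Since we are in the noiseless, fully observed regime---the sets $\cC_\train = \cC(\cA(c)\cup\{a\})$ and $\cA(c)$ are chosen precisely so that every entry of $\bY_\train$, $\by_\train$, and $\bY_\test$ is observed---I would first record that the matrix-estimation step is vacuous: with no noise and full observation, $\hbX_\train = \ME(\bY_\train) = \bY_\train$ and $\hbX_\test = \ME(\bY_\test) = \bY_\test$, and throughout $\by^{ij} = \bx^{ij} = U^i \bv^j$ by \rref{assumption:factor_model}.

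Second, I would substitute \rref{assumption:latent_factors_action}. Writing $\bv^a = \sum_{j \in \cA(c)} \beta_j \bv^j$ and using the factor model, a block-by-block computation yields the two key identities $\by_\train = \bY_\train \beta$ (the block indexed by context $i$ equals $U^i \sum_j \beta_j \bv^j = U^i \bv^a = \bx^{ia}$) and $\bx^{ca} = U^c \bv^a = \bY_\test \beta$. Thus $\by_\train$ lies in the column span of $\bY_\train$, and the target $\bx^{ca}$ is obtained by applying $\bY_\test$ to the \emph{same} coefficient vector $\beta$.

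Third, I would analyze the estimator $\hbeta^{ca} = \bY_\train^{\dagger}\by_\train = \bY_\train^{\dagger}\bY_\train\,\beta$. The central observation is that $\bY_\train^{\dagger}\bY_\train$ is the orthogonal projector onto $\mathrm{rowspan}(\bY_\train)$, so $\hbeta^{ca}$ need not equal $\beta$; rather their difference $\beta - \hbeta^{ca} = (I - \bY_\train^{\dagger}\bY_\train)\beta$ lies in $\mathrm{rowspan}(\bY_\train)^{\perp} = \ker(\bY_\train)$. This is the crux: the regression recovers $\beta$ only up to a component lying in the null space of the training matrix.

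Finally, \rref{assumption:span_action} closes the gap. Since $\mathrm{rowspan}(\bY_\test) \subseteq \mathrm{rowspan}(\bY_\train)$, taking orthogonal complements reverses the inclusion to give $\ker(\bY_\train) \subseteq \ker(\bY_\test)$; hence $\bY_\test(\beta - \hbeta^{ca}) = 0$, and therefore $\hbx^{ca} = \bY_\test\,\hbeta^{ca} = \bY_\test\,\beta = \bx^{ca}$. I expect the main conceptual obstacle to be the third and fourth steps: recognizing that the pseudoinverse does not reconstruct $\beta$ itself but only its projection onto the row space, and that the leftover null-space discrepancy is precisely what the span-inclusion assumption is designed to annihilate. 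The factor-model substitution and the noiseless reduction are routine by comparison.
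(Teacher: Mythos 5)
Your proof is correct and follows essentially the same route as the paper's: both reduce via the factor model to $\by_\train = \bX_\train\beta$ and $\bx^{ca} = \bX_\test\beta$, observe that the pseudoinverse returns the orthogonal projection of $\beta$ onto $\mathrm{rowspan}(\bX_\train)$, and invoke Assumption \ref{assumption:span_action} to conclude that $\bX_\test$ cannot distinguish $\beta$ from that projection. The only cosmetic difference is that you phrase the final step via the kernel inclusion $\ker(\bX_\train)\subseteq\ker(\bX_\test)$, whereas the paper states the equivalent fact in terms of right singular vectors, $\bv_\train\bv_\train^T\bv_\test = \bv_\test$.
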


{\bf Checking identifiability.} 
\rref{assumption:span_action} enables SI to recover $\cX$ from a sparse set of observations $\Omega$ (\rref{thm:main}). 
As such, \citet{agarwal2020synthetic} proposed a data-driven hypothesis test to check when this condition is satisfied in practice.  
Their test statistic $\widehat{\tau} = \| \bv_\test - \bv_\train \bv_\train^T \bv_\test \|_F^2$ measures the gap between the rowspaces of $\bX_{\train,\donor}$ and $\bX_{\test,\donor}$,
where $\bv_\train$ and $\bv_\test$ are the right singular vectors of $\bX_{\train,\donor}$ and $\bX_{\test,\donor}$, respectively. 
\citet{agarwal2020synthetic} derived a critical value $\tau_\alpha$, such that under the null hypothesis $H_0$ where \rref{assumption:span_action} holds, $\bbP(\widehat{\tau} \geq \tau_\alpha ~|~ H_0) \leq \alpha$. 
However, their critical value depends on the underlying parameters of an underlying noise distribution. Instead of estimating these parameters, we follow a heuristic based on the interpretation of $\widehat{\tau}$ as the spectral energy of $\bv_\test$ that does \textit{not} belong within $\spann(\bv_\train)$. 
Specifically, we fix $\rho \in [0, 1]$ and reject $H_0$ if $\widehat{\tau} \geq \rho \cdot \rank(\bv_\test)$, i.e., if more than $\rho$ fraction of the spectral energy in $\bv_\test$ lies outside of $\spann(\bv_\train)$.

Rejection of the test implies that \SIA~may not perform well on our prediction task, since \rref{assumption:span_action} is unlikely to hold. 
In practice, this allows us to switch to a simpler baseline estimator, as we show in \rref{sec:empirical}.
This hypothesis test also suggests an elegant method for picking a set of donor actions, as discussed in \rref{sec:synthetic-interventions}. 
For a fixed significance level, we may aim to find a set passing the hypothesis test which \textit{maximizes} the number of training contexts. 
Unfortunately, this induces a combinatorial optimization problem which may be difficult to solve when there are many actions. 
We consider two computationally efficient alternatives. 
First, we may greedily pick actions, according to whichever action \textit{least} reduces the number of training contexts, until the set passes the hypothesis test. 
Second, we may \textit{always} use $\cA(c)$ as the donor set, then use the hypothesis test to decide on whether to use the \SIA~estimate or a simpler baseline.

\begin{figure*}[t]
    \centering
    \includegraphics[width=.8\textwidth]{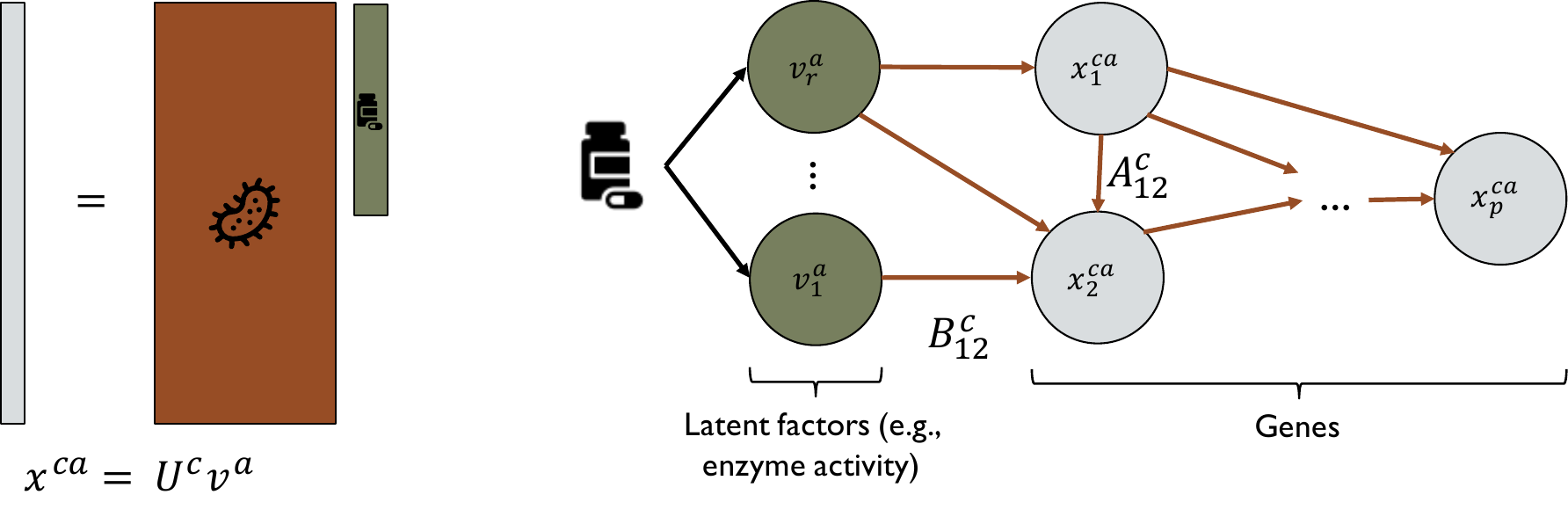}
    \caption{\small The assumed linear factor model and a linear structural equation model giving rise to it.}
    \label{fig:graphical-model}
\end{figure*}

\subsection{Connecting SI to Structural Causal Models}\label{sec:scm}
Here, we provide motivation for a linear factor model in terms of linear structural causal models, which have been frequently used as models of genomic networks \citep{friedman2000using,badsha2019learning}. 
In particular, we consider the following set of models:
\begin{defn} \label{defn:lsem}
A (noiseless) \emph{linear structural equation model} (SEM) over the vector $\bz \in \bbR^p$ is defined by the set of equations
$$
\bz_i = \sum_{j \in \pa_G(i)} A_{ij} \bz_j,  
$$
where $\pa_G(i)$ denotes the \emph{parents} of node $i$ in the directed acyclic graph $G$.
\end{defn}

A particular subclass of linear structural equation models implies a factor model.
In particular, we consider the following assumption on the SEM, illustrated in \rref{fig:graphical-model}.

\begin{assumption}\label{assumption:linear-scm}
Let $\bz^{c a} = (\bx^{ca}, \bv^a)$, where $\bv^a$ depends only on $a$ and not on $c$.
Assume that, for all contexts $c$, there exists $A^c \in \bbR^{p \times p}$ and $B^c \in \bbR^{p \times r}$ such that, for all actions $a$, $\bz^{c a}$ satisfies the following linear structural equation model:
\begin{equation}\label{eq:model}
    \bx^{ca} = A^c \bx^{ca} + B^c \bv^a
\end{equation}

Further assume that  and $\bv_i$ is unobserved for all $i \in [p]$.
\end{assumption}


\begin{prop} \label{prop:lsm_fm} 
Under \rref{assumption:linear-scm}, $\cX$ satisfies a linear factor model.
\end{prop} 

\begin{proof}
Under acyclicity of $G$, $A^c$ is lower-triangular up to a permutation, so that $(I - A^c)$ is invertible, and we may re-write $\bx^{ca}$ as
$$
\bx^{ca} = \underbrace{(I - A^c)^\inv B^c}_{U^c} \bv^a
$$
Defining $U^c$ as $(I - A^c)^\inv B^c$ completes the proof.
\end{proof}

{\em Interpretation.} 
Although Proposition \ref{prop:lsm_fm} is a simple observation, to the best of our knowledge, this connection has not been previously established.
The SI framework of \cite{agarwal2020synthetic}, and panel data literature more broadly \citep{abadie2, arkhangelsky2019synthetic, athey2021matrix, bai09}, often imposes a linear factor model structure on $\bx^{ca}$. 
Proposition \ref{prop:lsm_fm} provides a novel motivation for these models through the lens of structural causal models.
In our application, $\bv^a$ might represent biological factors, e.g., molecule concentrations or enzyme activity levels, while $U^c$ represents the ``gene expression program" \citep{pope2018emerging} run by cell type $c$.

\section{Empirical Results}\label{sec:empirical}
In this section, we perform extensive experimentation on causal imputation for the CMAP dataset.

{\bf CMAP dataset.}
\citet{subramanian2017next} developed the L1000 assay, which allows for cost-effective measurement of the gene expression signatures. 
This assay measures the transcription levels of 978 ``landmark" genes, picked via a data-driven approach based on their ability to recover information about the rest of the transcriptome. 
L1000 signatures have been measured from over 1,000,000 different samples, covering 71 different cell types and over 20,000 different chemical compounds.
We randomly sample 100 of these compounds, along with the ``control" compound, \textit{DMSO}, to create a smaller, unbiased version of the dataset.
For each cell-type, compound pair, we average all corresponding signatures.
This gives a dataset of 519 gene expression signatures.
A detailed evaluation of the dataset and a description of our preprocessing pipeline is described in \rref{app:l1000-dataset}. The dataset can be accessed at \url{https://www.ncbi.nlm.nih.gov/geo/query/acc.cgi?acc=GSE92742}.

{\bf Baseline algorithms.}
Figure~\ref{fig:umap} shows an embedding of 11,185 gene expression vectors via UMAP \citep{mcinnes2018umap},
colored by cell type. 
Clearly, most of the variation in the data is due to cell type rather than the compound applied to the cell.
This is supported by \rref{fig:umap-vcap} in \rref{app:umap-pert}, where we see that most compound-induced expressions falls within the normal variation of the cell type; the additional variation due to compounds is minor.
This suggests a natural \emph{mean-over-actions} baseline 
\begin{align}\label{eq:mean.in.cell}
\widehat{\bx}^{ca}_{\avga} & = \frac{1}{|\cA(c)|} \sum_{a' \in \cA(c)} \bx^{ca'}.
\end{align}
Indeed, it is well-known that the effect of chemical compounds is much smaller than the effect of cell type, so that the simple mean-over-actions estimator already performs well, as we examine shortly.
We consider three other natural baselines (see equations in \rref{app:baseline-estimators}).
The \emph{mean-over-contexts} is analogous to the mean-over-actions estimator, with the average taken over different contexts from the same action. 
We may combine these estimators into the parametric \emph{two-way mean} estimator with parameter $\lambda_c \in [0, 1]$, by taking a convex combination of the \emph{mean-over-actions} and \emph{mean-over-contexts} estimators. We use $\lambda_c = 0.5$.
The \emph{fixed action effect estimator}, discussed in \rref{sec:related}, is defined by computing the average shift induced by action $a$ compared to the control, and adding that shift to the control for $c$.
Finally, we compare against a variant of \SIA~that re-scales contexts rather than actions, as discussed in \rref{sec:synthetic-interventions}, which we refer to as~\SIC.
Note that standard imputation methods are not scalable to the CMAP dataset: the MissForest implementation in \texttt{missingpy} takes 2.5 hours per prediction, and would take $2.5 \times 519/24 \approx 54$ days to run on the subsampled data.
\texttt{IterativeImputer}, a version of MICE in \texttt{sklearn}, is somewhat more scalable, but still prohibitively slow on our subsampled data---see \rref{app:mice} for a comparison on 120 signatures.

\begin{figure*}[t]
    \centering
    \includegraphics[width=.8\textwidth]{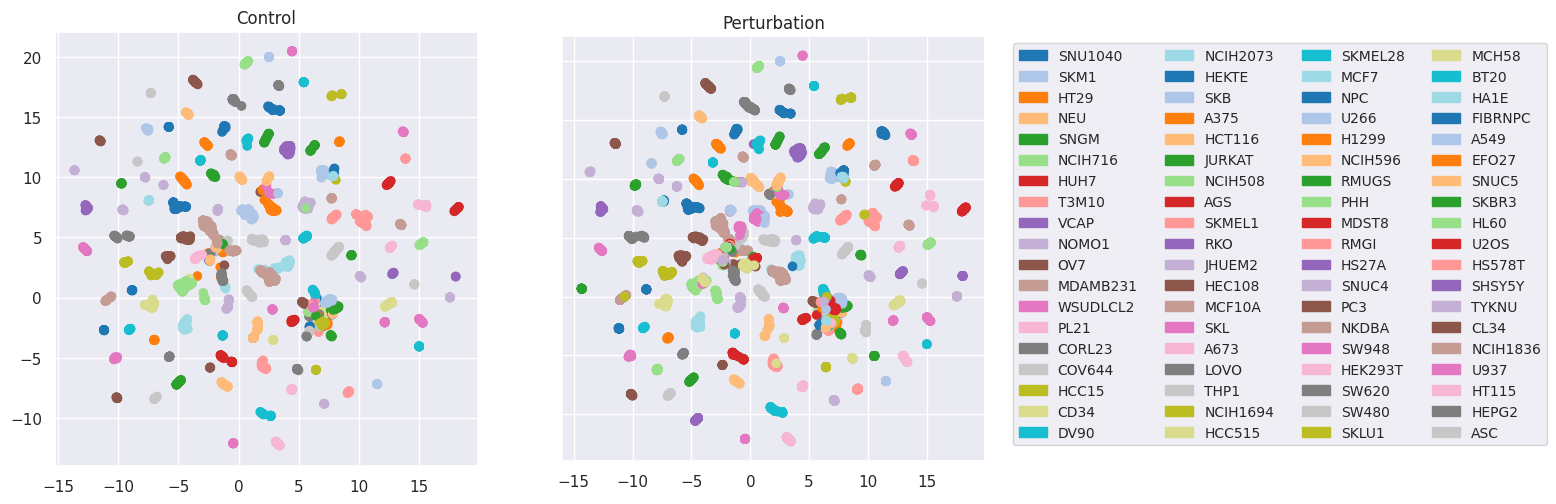}
    \caption{\small 
    UMAP embedding of gene expression data, colored by cell type.
    }
    \vspace{-0.1cm}
    \label{fig:umap}
\end{figure*}

\begin{figure*}
    \centering
    \begin{subfigure}{.48\textwidth}
        \includegraphics[width=\textwidth]{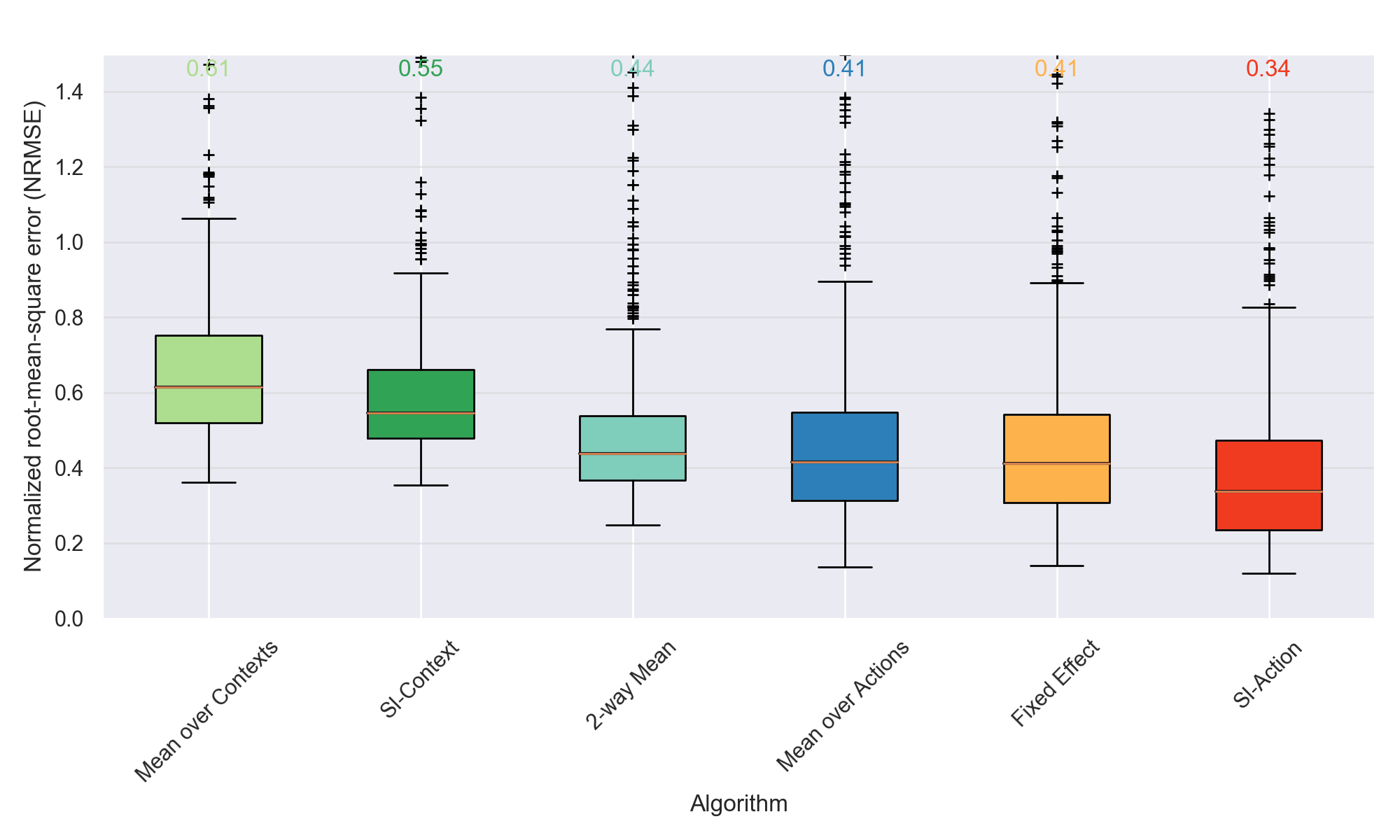}
    \end{subfigure}
    ~
    \begin{subfigure}{.48\textwidth}
        \includegraphics[width=\textwidth]{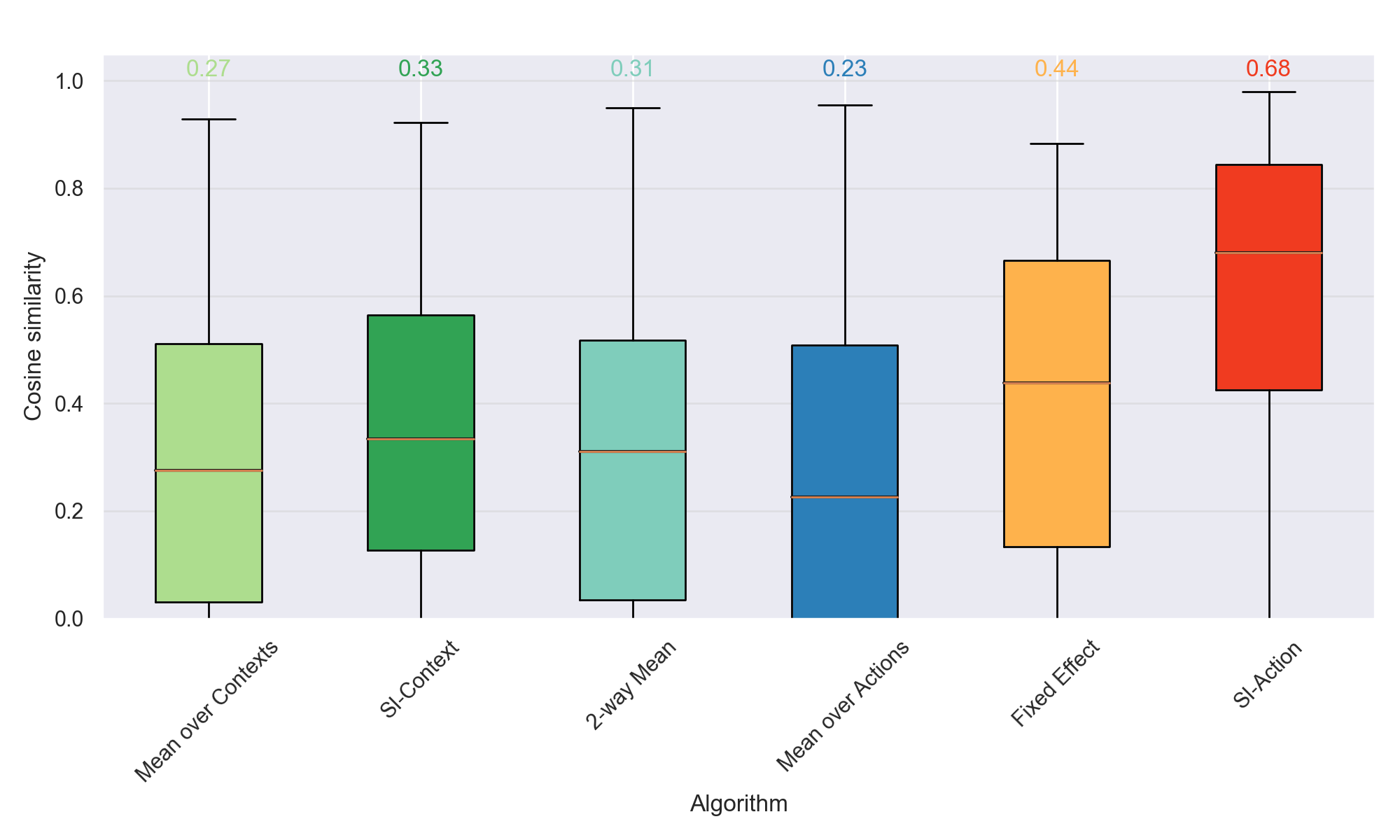}
    \end{subfigure}
    \caption{\small \SIA~outperforms baselines on causal imputation in the CMAP dataset in terms of NRMSE (\textbf{left}) and cosine similarity (\textbf{right}) between the true and predicted effects of drugs in each cell type.}
    \label{fig:average-results}
\end{figure*}

\subsection{Prediction Error}\label{subsec:results}
For each algorithm, we measure performance using a leave-one-out (LOO) procedure.
In particular, for each cell type and compound pair $(c, a) \in \Omega$ that is measured, we remove the true gene expression signature $\bx^{ca}$ from the dataset, and use the remainder of the dataset to estimate $\widehat{\bx}^{ca}$.
We measure the accuracy of each estimate $\widehat{\bx}^{ca}$ in two ways.
Normalized root-mean-square error (NRMSE), $\| (\widehat{\bx}^{ca} - \bx^{ca})/p \|_2 / \IQR(\bx^{ca})$, reported in \rref{fig:average-results}, measures the absolute accuracy to predict the missing gene expression vector\footnote{$\IQR(\bx^{ca})$ denotes the \textit{interquartile range} for the vector $\bx^{ca}$.}.
We show NRMSE since it is insensitive to scaling and shifting, but the relative ordering between methods is insensitive to the metric, as seen in \rref{app:r2}.
Meanwhile, cosine similarity between the predicted shift $\widehat{\bx}^{ca} - \bx^{c a_0}$ and the true shift $\bx^{ca} - \bx^{c a_0}$ (where $a_0$ is the control action), reported in \rref{fig:average-results}, measures the degree of \textit{alignment} between the true effect of drug $a$ in cell type $c$ and the predicted effect.
\rref{app:computation-time} shows that \SIA~and the baselines are roughly similar in terms of computation time.

The results of each estimator provide insight into the relationship between cell types and compounds. 
As expected from Figure~\ref{fig:umap}, the mean-over-actions estimator is a strong baseline, with a median NRMSE of 0.41. 
In contrast, the mean-over-contexts estimator performs quite poorly. 
Seeing Figure~\ref{fig:umap}, this is expected: the substantial variation \textit{between} different cell types suggests that each cell type is \textit{not} representable as a linear combination of the others.
The fixed effect estimator, using $a' = $~DMSO, performs similarly to the mean-over-actions estimator, indicating its prediction quality is dominated by the cell-type-specific baseline $\bx^{ca'}$, i.e., the shift vector for the compound is small.
Finally, \SIA~performs best, with a median NRMSE of 0.34, a 17\% reduction compared to the best baseline.
The findings on cosine similarity corroborate these results: the drug effects predicted by \SIA~are significantly more aligned with the true drug effects compared to the baselines.

\SIA~takes a weighted linear combination of signatures for a given cell type (across different compounds), in contrast to 
{\em mean-over-actions} baseline which simply takes equal weights across these signatures. 
It is noteworthy that such weights are learnt from signatures across compounds for \textit{different} cell types, and yet it is still effective to {\em transfer} the learnt model to other cell types.
This gives credence to the factor model laid out in \rref{sec:scm} holding in our setting.
In contrast, \SIC~performs almost as poorly as the mean-over-contexts baseline, indicating again that the difference between cell types is too great to express one cell type as a linear combination of others.

Notably, we highlight that we do {\em not} de-noise the data in \SIA~or \SIC~to achieve their optimal empirical results, i.e., the results in Figure~\ref{fig:average-results} do not use any $\ME$ algorithm to remove high amounts of noise that might be present.  
De-noising is not needed in our setting as the data is implicitly de-noised, since each measurement is an average of multiple observations for a given measured cell type and compound pair $(c,a) \in \Omega$ in the data. 
Indeed, as shown in Appendix \ref{app:unaveraged-results}, applying $\ME$ does improve prediction if we
restrict ourselves to use only a single sample per observed $(c, a) \in \Omega$. 

%
\begin{figure*}
    \centering
    \begin{subfigure}{0.32\linewidth}
        \centering
        \includegraphics[width=\textwidth]{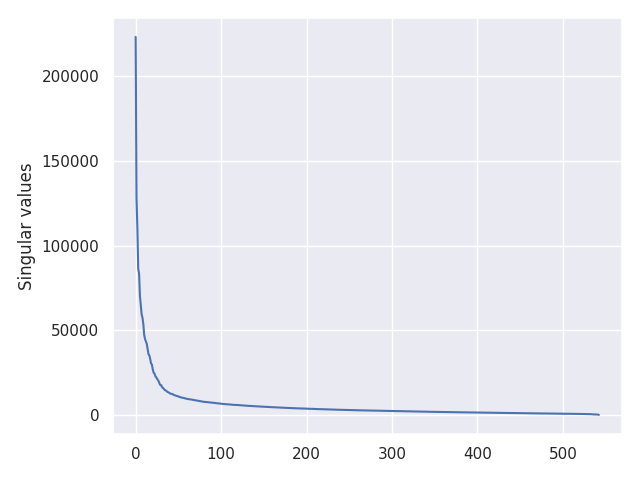}
        \caption{Spectrum of subsampled dataset. 95\% of spectral energy is captured in the top 53 singular values.}
        \label{fig:spectra}
    \end{subfigure}
    ~
    \begin{subfigure}{0.32\linewidth}
        \centering
        \includegraphics[width=\textwidth]{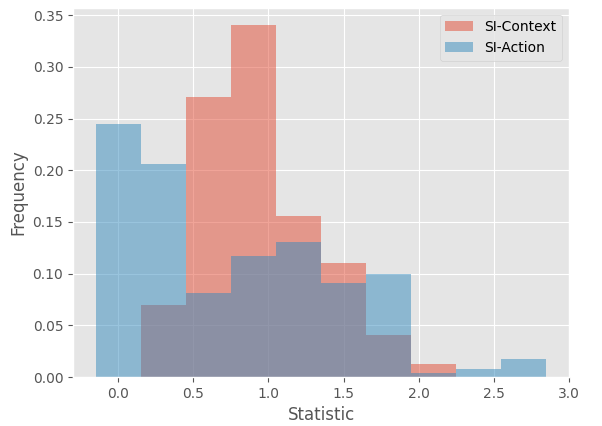}
        \caption{Distribution of $\widehat{\tau}$ when regressing along cell type (context) and compound (action) dimensions.}
        \label{fig:statistic-histogram}
    \end{subfigure}
    ~    
    \begin{subfigure}{0.32\linewidth}
        \centering
        \includegraphics[width=\textwidth]{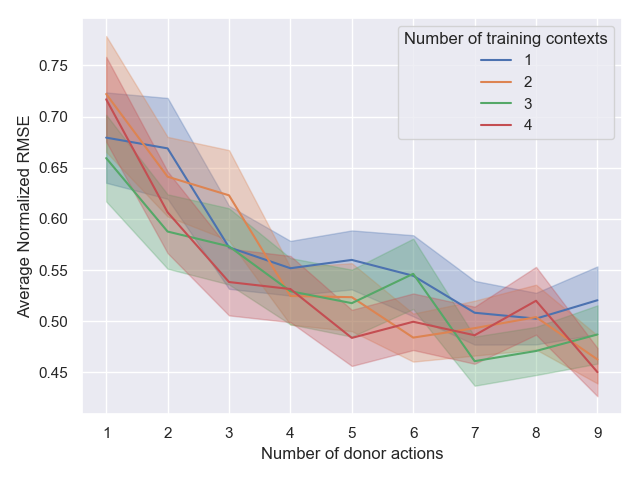}
        \caption{NRMSE as a function of donor size and training contexts. Error bars show 0.1 standard deviations.
        }
        \label{fig:num-donors}
    \end{subfigure}
    \caption{\small Explaining efficacy of \SIA~for CMAP dataset.}
    \vspace{-0.2cm}
\end{figure*}

{\bf \SIA~is effective in biological settings.}
As discussed in \rref{sec:mainresults}, a visual inspection of the spectrum of our observations helps verify the validity of the linear factor model and (by extension) \rref{assumption:latent_factors_action}.  
\rref{fig:spectra} shows the singular values of the $519 \times 978$ matrix of gene expression signatures, where over 95\% of the spectrum is captured by the top 53 singular values, i.e., $r \approx 53$; hence, the CMAP dataset exhibits the desired low-rank factor structure. 
To verify Assumption \ref{assumption:span_action}, we plot frequencies of the test statistic $\widehat{\tau}$ (detailed in \rref{sec:synthetic-interventions}) in Figure~\ref{fig:statistic-histogram}.
A higher value of $\widehat{\tau}$ indicates that Assumption \ref{assumption:span_action} is less likely to hold, and hence the results of our method are less meaningful. 
We see that the test statistic tends to be lower when regressing in the action dimension, explaining the superior performance of \SIA~as compared to \SIC.
This suggests that \SIA~may be a useful tool in other biological settings, where similar invariance patterns and data availability are common.

{\bf \SIA~takes advantage of data structure.} 
Finally, to understand how the performance of \SIA~depends on the number of donor actions (i.e., $|\cA(c)|$) and training contexts (i.e., $|\cC_\train|$), we restrict our attention to context-action pairs with at least 10 donor actions and at least 5 training contexts, resulting in 75 pairs.
For each context-action pair $(c, a)$, and each $(i, j) \in \{ 1, \ldots, 10 \} \times \{ 1, \ldots, 5 \}$, we perform \SIA~with $i$ random donor actions and $j$ random training contexts.
In \rref{fig:num-donors}, we show the average NRMSE of the predictions over all 75 pairs, for each tuple $(i, j)$.
As expected, increasing the number of donor actions leads to a large increase in performance, and increasing the number of training contexts also leads to a large increase in performance, especially when there are few donor actions.
The original SI estimator always has $|\cC_\train| = 1$, but our results highlight the value of using \textit{all} available data to learn the regression model, as is done by \SIA.

\section{Discussion}\label{sec:discussion}
In this paper, we introduced the \SIA~estimator, an extension of the SI estimator of \citet{agarwal2020synthetic}, for use on a task we call \textit{causal imputation}: predicting the effect of an action across different contexts.
We showed that the \SIA~estimator provides valid estimates of unseen outcomes under a linear factor model, which we motivate via a connection to structural causal models. 
We demonstrated the superior performance of \SIA~to other baselines on the task of causal imputation in the CMAP dataset, an important source of information for predicting the effect of various compounds on gene expression. 

Several important directions are left open for future work, of which we cover only a few. 
First, the tradeoff between the number of donor actions and the number of training contexts raises the need for a principled method for picking ``optimal" donor sets. 
One promising approach may be to frame this choice as a combinatorial optimization problem, where the objective function may be submodular under some assumptions on the problem structure.
A related question is whether we can apply SI in a \textit{sequential manner} to infer which samples are most informative to reduce sample complexity in an experimental design and/or active learning framework.
   
Another important direction for future work is on \textit{nonlinear} methods to the causal imputation problem.
Genomic data is known to exhibit highly nonlinear relationships, so that our model in \rref{sec:scm} is only a coarse approximation.
A straightforward nonlinear extension of our method would be to perform \SIA~in a latent space learned by an autoencoder.
Two concepts from this paper are likely to be useful in the development and analysis of nonlinear methods. 
First, we demonstrated that it is beneficial to develop representations for each action which are invariant to the context in which they occur, allowing for the effect of the action to be \textit{transported} between contexts.
Second, our mechanistic explanation in \rref{sec:scm} for the success of \SIA~may serve as a starting point for explaining the success of nonlinear methods.

\acks{
Chandler Squires was partially supported by an NSF Grad- uate Fellowship, MIT J-Clinic for Machine Learning and Health, and IBM. 
Caroline Uhler was partially supported by NSF (DMS-1651995), ONR (N00014-17-1-2147 and N00014-18-1-2765), and a Simons Investigator Award. 
Dennis Shen was partially supported by a Draper Fellowship.
Anish Agarwal was partially supported by a MIT IDSS Thomson Reuters Fellowship.
}

\bibliography{bib}

\newpage
\appendix
\newcommand{\snum}{S}
\renewcommand{\theequation}{\snum.\arabic{equation}}

\noindent {\Large\textbf{Supplementary Material}}

\section{Comparison with \texorpdfstring{\cite{agarwal2020synthetic}}{Agarwal et al. (2020)}}\label{app:si-comparison}

\citet{agarwal2020synthetic} consider the setting where one observes data associated with a collection of units or contexts (e.g., cell types).
Here, there is a pre-intervention period where all units are observed under control (e.g., absence of any actions),
followed by a post-intervention period where each unit experiences {\em one} intervention or action (e.g., compound).
Under this particular sparsity pattern, the goal is to impute what would have occurred to {\em each} unit under {\em every} intervention in the post-intervention period.
In contrast, our work considers a more general observation pattern that allows each unit to potentially experience {\em multiple} interventions, which is typically present in biological applications. 
Motivated by the typical data structures found in these domains, we extend the algorithm to regress along interventions (compared to units as is originally proposed), and provide suitable conditions under which the desired counterfactual outcomes are accurately recovered.

\section{Proof of Theorem \ref{thm:main}} \label{app:proof}
\begin{proof}
Consider any $(c,a) \notin \Omega$.
Then by the linear factor model assumption and \rref{assumption:latent_factors_action}, 
for all $i \in \cC$
%
\begin{align}
    \bx^{ia} &= U^i \bv^a 
    \\ &= U^i \Big(\sum_{j \in \cA(c)} \beta_j \bv^j \Big)
    \\ &= \sum_{j \in \cA(c)} \beta_j (U^i \bv^j)
    \\ &= \sum_{j \in \cA(c)} \beta_j \bx^{ij}. \label{eq.0} 
\end{align}
Since \eqref{eq.0} holds for all $i \in \cC$, it necessarily holds for the target context $c$ and those contexts $i \in \cC_\train = \cC(\cA(c) \cup \{a\})$. 

Now,
let $\bv_\train \in \bbR^{|\cA(c)| \times r_1}$ and $\bv_\test \in \bbR^{|\cA(c)| \times r_2}$ denote the right singular vectors of $\bX_{\train,\donor}$ and $\bX_{\test,\donor}$, respectively, with $r_1 = \text{rank}(\bX_{\train,\donor})$ and $r_2 = \text{rank}(\bX_{\test,\donor})$. 
%
Recall that $\widehat{\beta} = \bX_{\train,\donor}^{\dagger} \by_\train$. 
Since $\widehat{\beta} \in \text{rowspan}(\bX_{\train,\donor}) = \text{span}(\bv_\train)$ by design, it follows from \eqref{eq.0} that
$\widehat{\beta} = \bv_\train \bv_\train^T \beta$.
Also, \rref{assumption:span_action}
implies that $\bv_\train \bv_\train^T \bv_\test = \bv_\test$.
Combining these arguments yields
\begin{align}
    \widehat{\bx}^{ca} &= \bX_{\test,\donor} \widehat{\beta}
    = \bX_{\test,\donor} \bv_\train \bv_\train^T \beta 
    = \bX_{\test,\donor} \beta 
    \\ &= \sum_{j \in \cA(c)} \beta_j \bx^{cj}
    = \bx^{ca}. 
\end{align}
This completes the proof.
\end{proof}

\section{L1000 Dataset}\label{app:l1000-dataset}

In Figure~\ref{fig:count_matrix}, we display the availability of gene expression signatures for each cell type/chemical compound pair. The cell types are sorted from left to right by the number of compounds for which gene expression signatures are available. Similarly, the compounds are sorted from bottom to top by the number of cell types for which gene expression signatures are available. 

\begin{figure*}[!h]
    \centering
    \begin{subfigure}{.45\textwidth}
        \includegraphics[width=\textwidth]{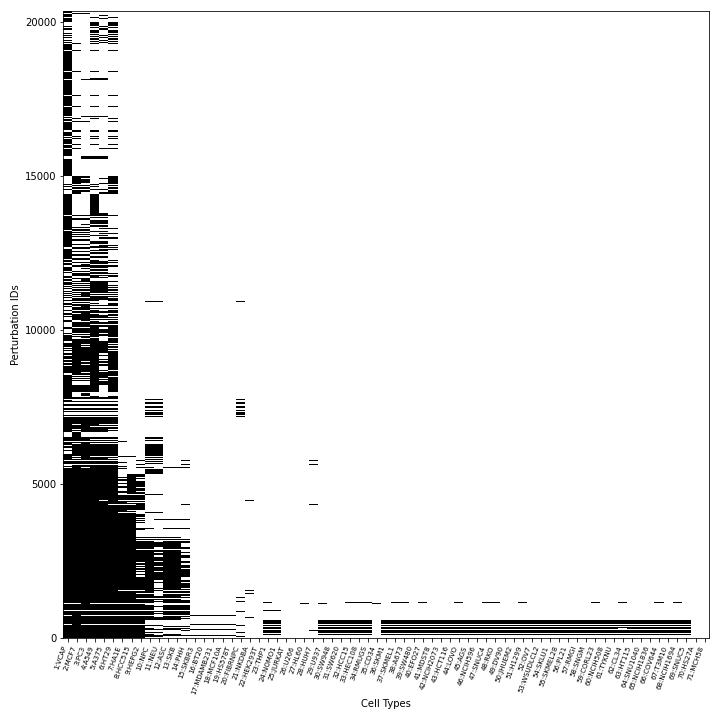}
        \caption{Original dataset.}
        \label{fig:count_matrix}
    \end{subfigure}
    \qquad
    \begin{subfigure}{.45\textwidth}
        \includegraphics[width=\textwidth]{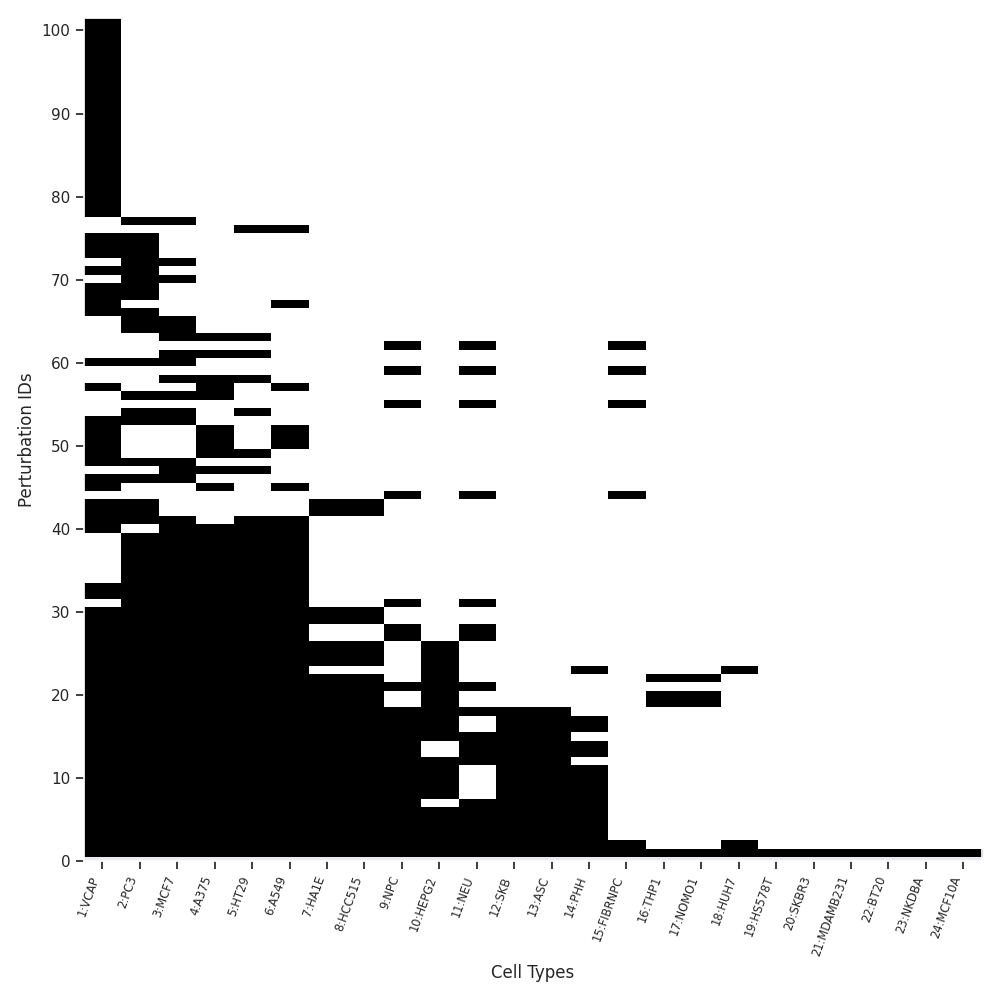}
        \caption{Subsampled dataset.}
        \label{fig:count_matrix_subset}
    \end{subfigure}
    
    \caption{Availability matrix for cell-type compound pairs. A black rectangle indicates that the gene expression profile is available, a white rectangle indicates that it is not.}
\end{figure*} 
 
In Figure~\ref{fig:sorted_celltypes}, we display the total number of compounds for which gene expression signatures are available, for each cell type. The cell type with the most compounds available is VCAP, with 15,805 compounds available out of 20,369.

In Figure~\ref{fig:sorted_perturbations}, we display the total number of cell types for which gene expression signatures are available, for each compound. The compound with the most cell types available is DMSO (control), which is available for 70 of 71 cell types, followed by BRD-A19037878, which is available for 64 out of 71 cell types.

\begin{figure*}
    \centering
    \begin{subfigure}{.45\textwidth}
        \includegraphics[width=\textwidth]{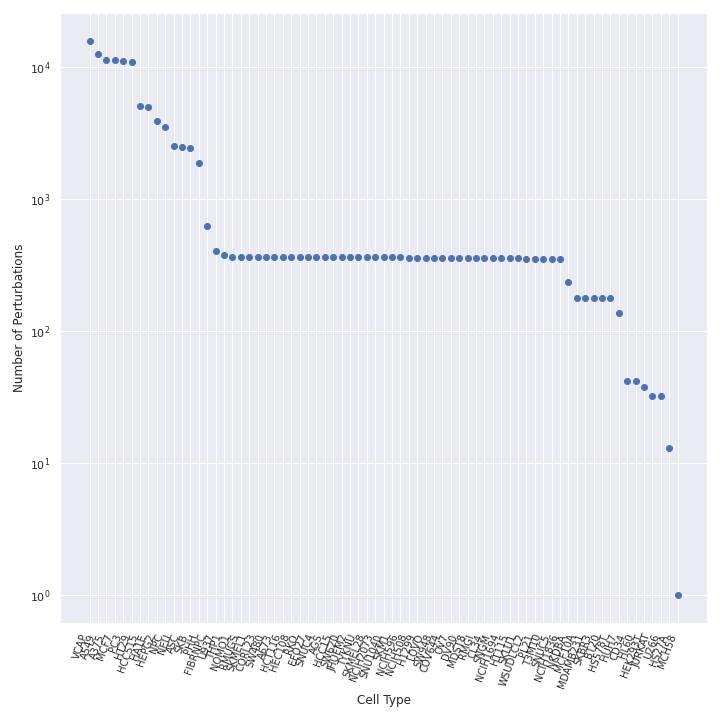}
        \caption{Compounds per cell type in the original dataset.}
        \label{fig:sorted_celltypes}
    \end{subfigure}
    \qquad
    \begin{subfigure}{.45\textwidth}
        \includegraphics[width=\textwidth]{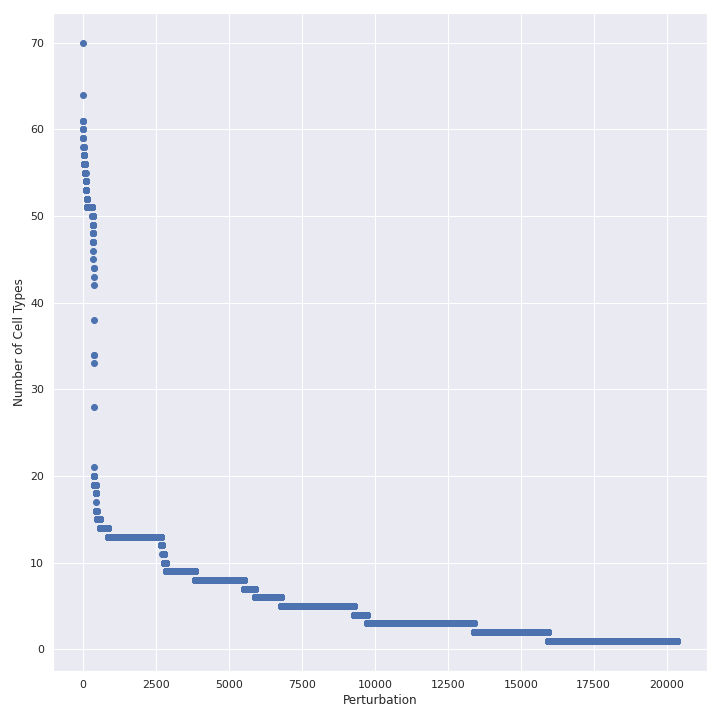}
        \caption{Cell types per compound in the original dataset.}
        \label{fig:sorted_perturbations}
    \end{subfigure}
    
    \begin{subfigure}{.45\textwidth}
        \includegraphics[width=\textwidth]{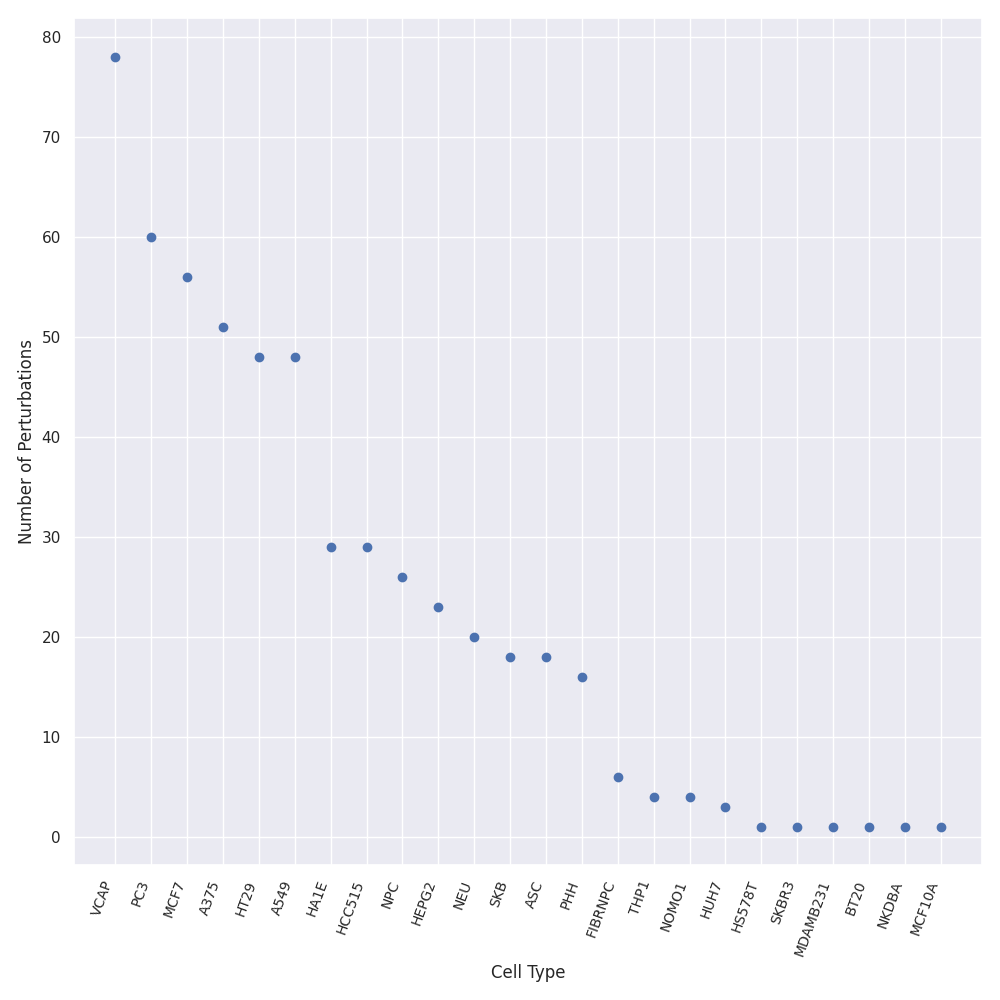}
        \caption{Compounds per cell type in the subsampled dataset.}
        \label{fig:sorted_celltypes_subset}
    \end{subfigure}
    \qquad
    \begin{subfigure}{.45\textwidth}
        \includegraphics[width=\textwidth]{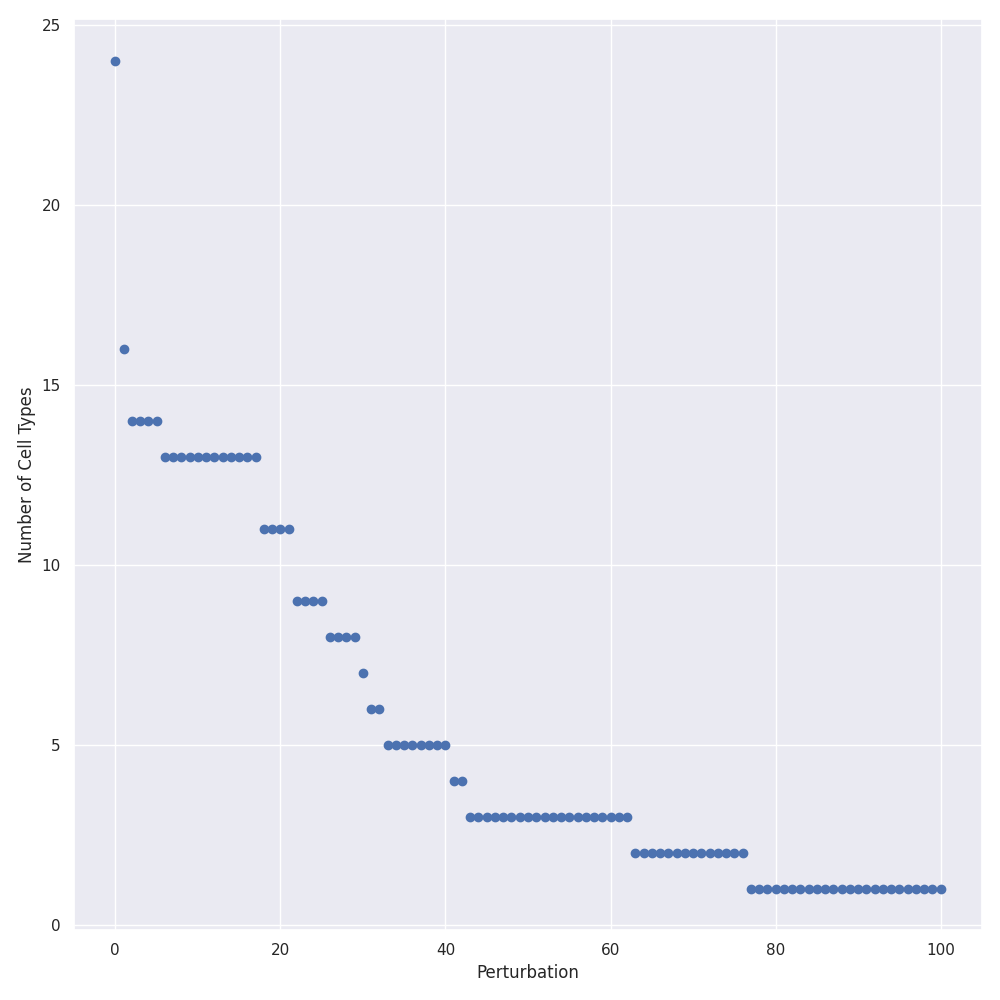}
        \caption{Cell types per compound in the subsampled dataset.}
        \label{fig:sorted_perturbations_subset}
    \end{subfigure}
    
    \caption{Cross-sectional counts in the original and subsampled datasets.}
\end{figure*}

    

\textbf{Data Selection.}
We use the Level 2 data from L1000 dataset, which contains unnormalized gene expression values. 
The data is loaded using \texttt{cmapPy} \citep{enache2019gctx}, available under a 3-clause BSD license.
The Level 2 data is split into two sets, ``delta" and ``epsilon", containing 49,216 and 1,278,882 samples, respectively. They differ in which landmark genes are used; we only use the larger ``epsilon" dataset for consistency of our results.

We select 100 compounds at random to run all of our analyses over, in order to create a smaller but unbiased dataset. We show the plots corresponding to those that we showed for the whole dataset in Figures~\ref{fig:count_matrix_subset}, \ref{fig:sorted_celltypes_subset}, and \ref{fig:sorted_perturbations_subset}; they qualitatively verify that the subsampled dataset is similar in character to the original.




\section{UMAP on VCAP data}\label{app:umap-pert}

In Figure~\ref{fig:umap-vcap}, we show the UMAP embedding of gene expression data from 70 different compounds in the VCAP cell line, which we picked since it has the greatest number of samples for any cell type in the dataset. Comparing to Figure~\ref{fig:umap}, which included 70 different cell types, we see that gene expression vectors (even within a single cell type) cluster far less by compound than they do by cell type. Moreover, most compounds do not substantially differ from the control (DMSO). This suggests that we should expect estimating the cell-type specific compound effect will be a difficult task.

\begin{figure*}[!h]
    \centering
    \includegraphics[width=\textwidth]{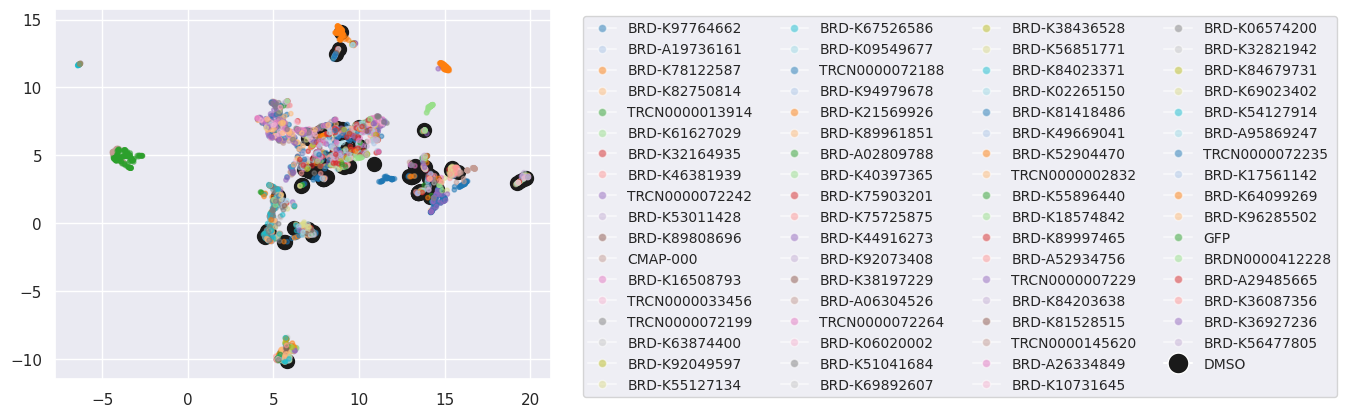}
    \caption{
    UMAP embedding of gene expression data from the cell type VCAP.
    }
    \label{fig:umap-vcap}
\end{figure*}

\newpage
\section{Baseline Estimators}\label{app:baseline-estimators}

The \emph{mean-over-contexts} estimator is defined as 
\begin{align}\label{eq:mean.in.perturbation}
\widehat{\bx}^{ca}_{\avgc} = \frac{1}{|\cC(a)|} \sum_{c' \in \cC(a)} \bx^{c'a}.
\end{align}
That is, we average all contexts which receive the target action.

The \emph{two-way mean} estimator with parameter $\lambda_c \in [0, 1]$, 
\begin{align}
\widehat{\bx}^{ca}_{\twoway} & = \lambda_c \widehat{\bx}^{ca}_{\avgc} + (1 - \lambda_c) \widehat{\bx}^{ca}_{\avga},
\end{align}
is a convex combination of the \emph{mean-over-actions} and \emph{mean-over-contexts} estimators.

Finally, the \emph{fixed action effect estimator}, relative to action $a'$, is defined as
\begin{align}
    \widehat{\bx}_\fae^{ca} & = \bx^{c a'} + \widehat{\bs}(a' \to a), \label{eq:fae-estimator}
\end{align}
where
\begin{align}
    \widehat{\bs}(a' \to a) & = \frac{1}{|\cC(a)|} \sum_{c \in \cC(a)} (\bx^{c a} - \bx^{c a'}). \label{eq:fae-vector}
\end{align}
In particular, a natural choice for $a'$ is ``control", i.e., no action.

\newpage
\section{Additional Empirical Results}

\subsection{Results with MICE}\label{app:mice}

In \rref{fig:results-mice}, we reduce the number of randomly picked interventions from 100 to 20, so that MICE can be run in a reasonable amount of time (3 hours).
MICE is run with the default parameters for the \texttt{IterativeImputer} class in \texttt{sklearn} as of May 28th, 2021, including \texttt{max\_iter=10} and \texttt{tol=0.001}.
See the \texttt{IterativeImputer} \href{https://scikit-learn.org/stable/modules/generated/sklearn.impute.IterativeImputer.html}{documentation} for more details.

\begin{figure}[ht]
    \centering
    \begin{subfigure}{.48\textwidth}
        \includegraphics[width=\textwidth]{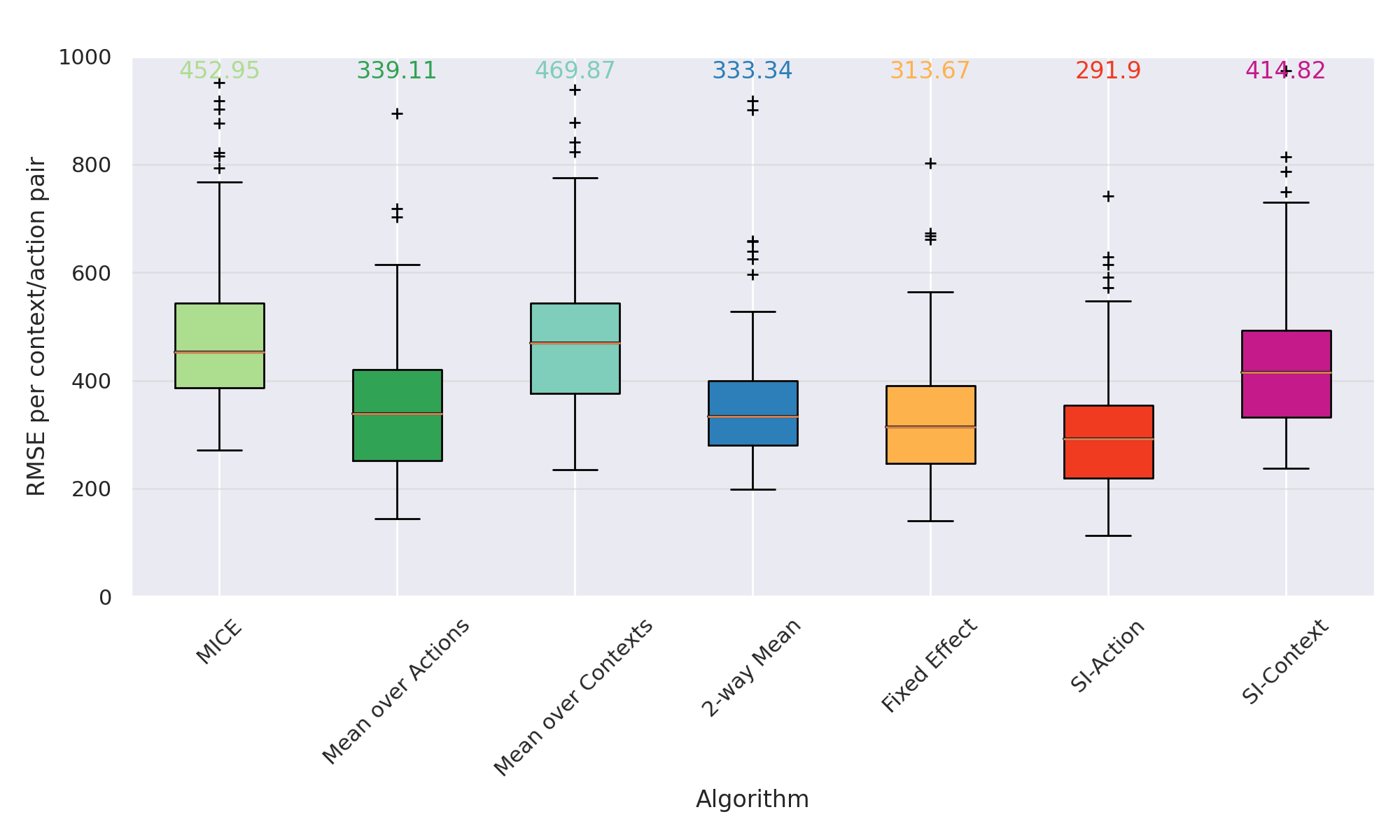}
    \end{subfigure}
    ~
    \begin{subfigure}{.48\textwidth}
        \includegraphics[width=\textwidth]{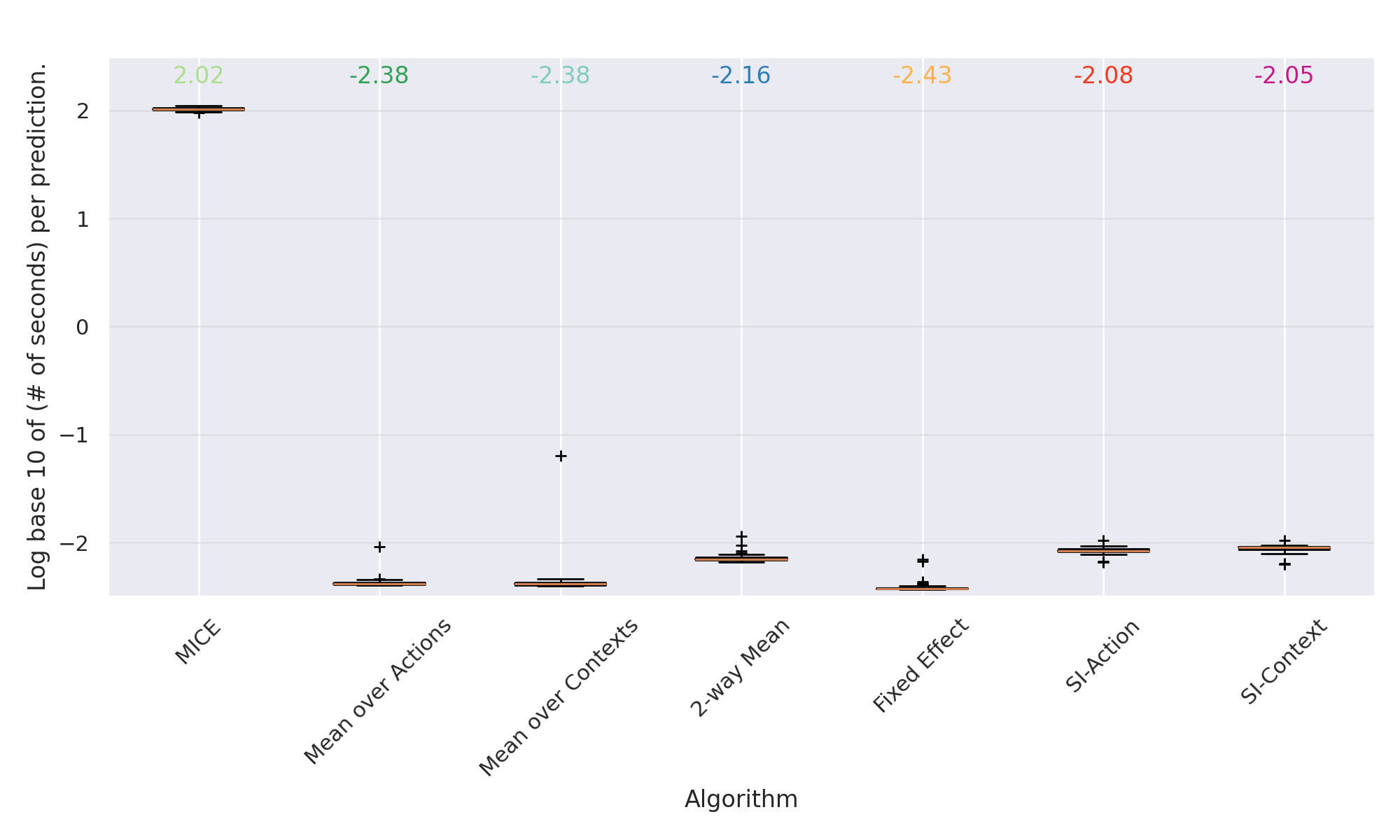}
    \end{subfigure}
    
    \caption{MICE is roughly 4 orders of magnitude slower than \SIA~and delivers has poor estimation performance compared to \SIA~and the other baselines.}
    \label{fig:results-mice}
\end{figure}

\subsection{\texorpdfstring{$R^2$}{R2}}\label{app:r2}

\rref{fig:other-metrics} demonstrates the performance on the leave-one-out (LOO) prediction task in \rref{sec:empirical} on various metrics. The relative ordering of method performance remains the same.

\begin{figure*}[h]
    \centering
    \begin{subfigure}{.45\textwidth}
        \includegraphics[width=\textwidth]{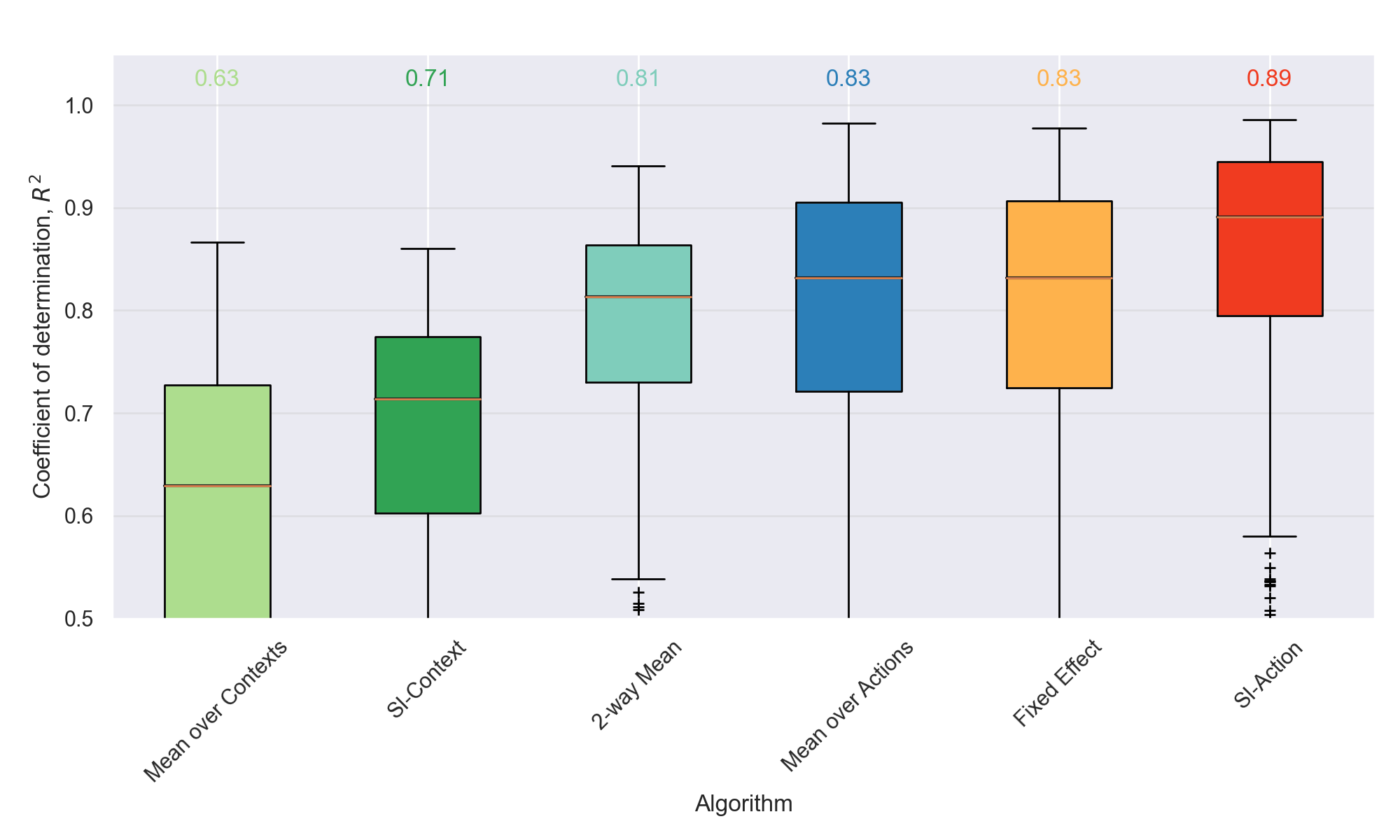}
        \caption{$R^2$.}
    \end{subfigure}
    ~
    \begin{subfigure}{.45\textwidth}
        \includegraphics[width=\textwidth]{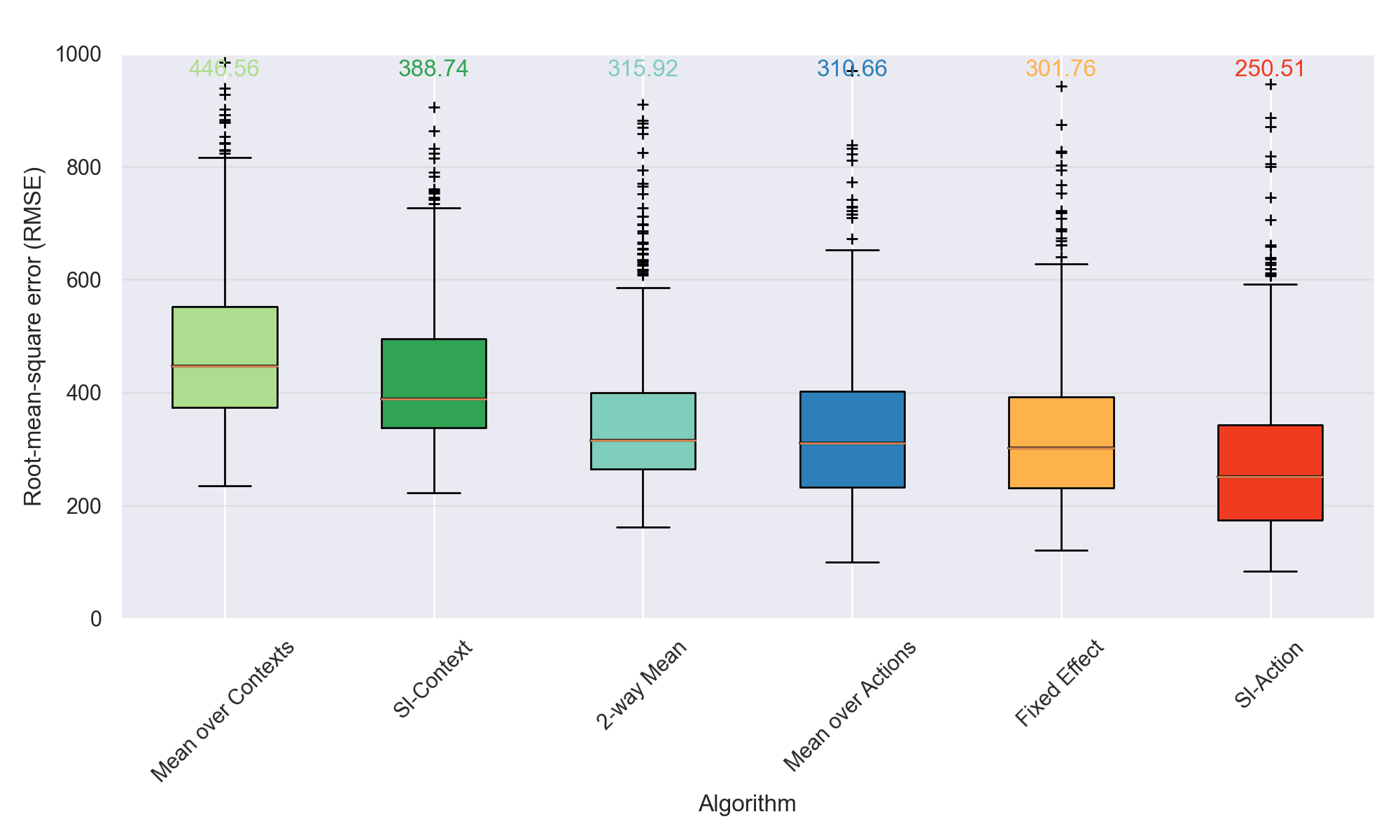}
        \caption{Root-mean-square error.}
    \end{subfigure}
    
    \begin{subfigure}{.45\textwidth}
        \includegraphics[width=\textwidth]{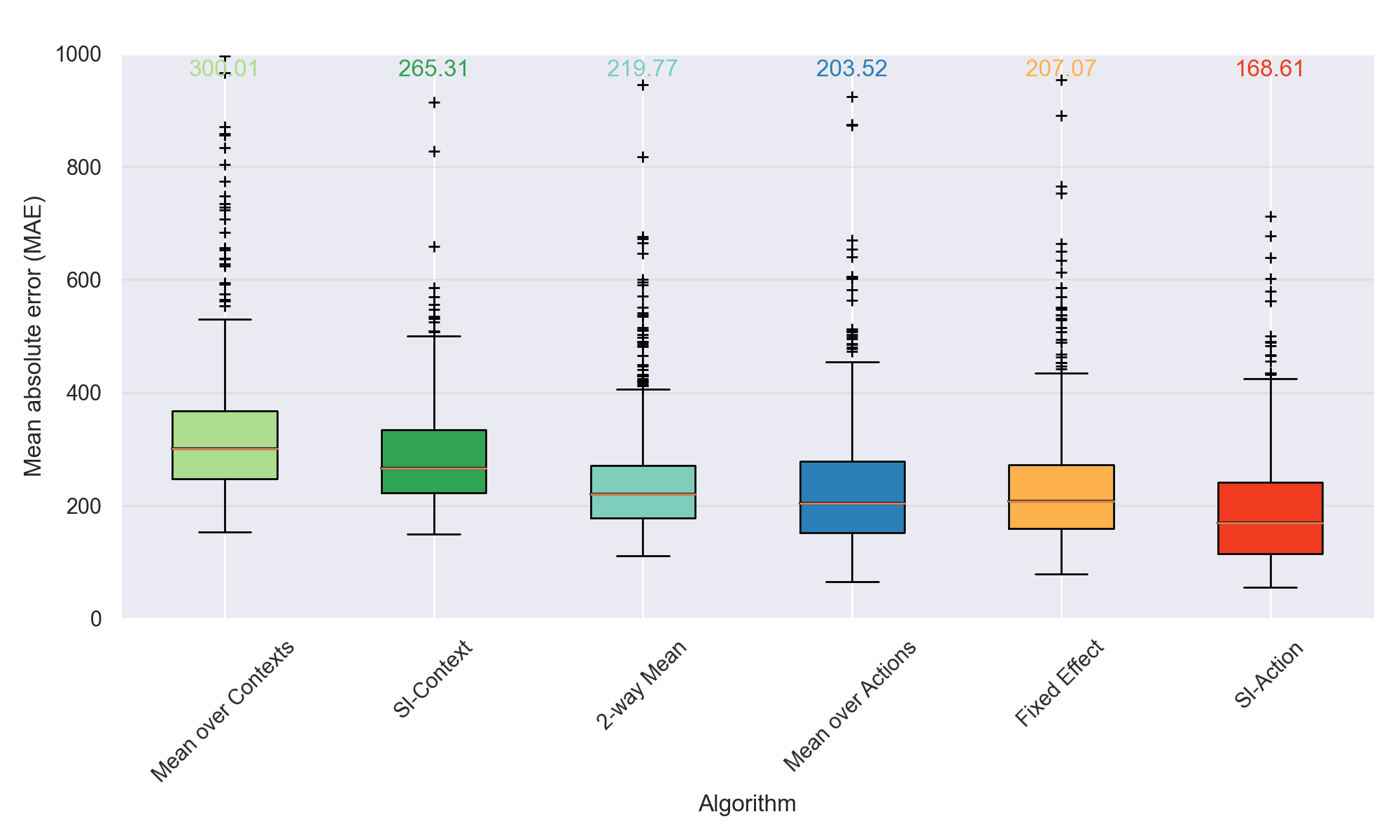}
        \caption{Mean absolute error.}
    \end{subfigure}
    ~
    \begin{subfigure}{.45\textwidth}
        \includegraphics[width=\textwidth]{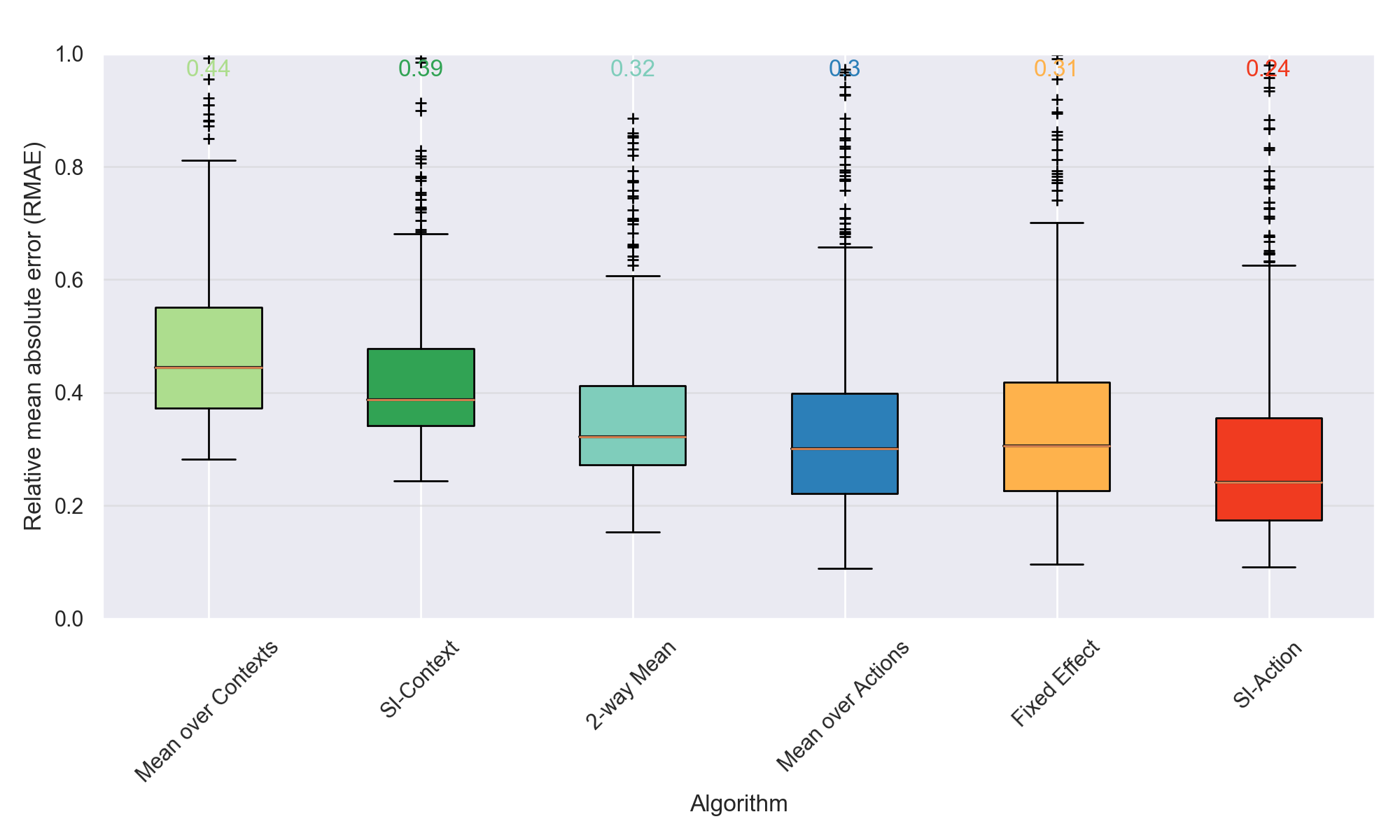}
        \caption{Relative mean absolute error.}
    \end{subfigure}
    
    \caption{The relative rankings of \SIA~and baselines in terms of $R^2$ are similar across different metrics, with \SIA~performing best, followed by mean-over-actions and the fixed effect estimators.}
    \label{fig:other-metrics}
\end{figure*}

\subsection{Computation Time}\label{app:computation-time}

\rref{fig:time} demonstrates the time (on log scale) required for each leave-one-out (LOO) prediction task from \rref{sec:empirical}.
All methods are highly efficient, taking less than 0.1 seconds on almost all instances and roughly 0.01 seconds on most instances.
The baselines are, on typical instances, roughly 2x faster than the methods based on synthetic interventions.
The time required by synthetic interventions varies more due to the varying number of donor actions and training contexts.

\begin{figure}[h]
    \centering
    \includegraphics[width=0.5\textwidth]{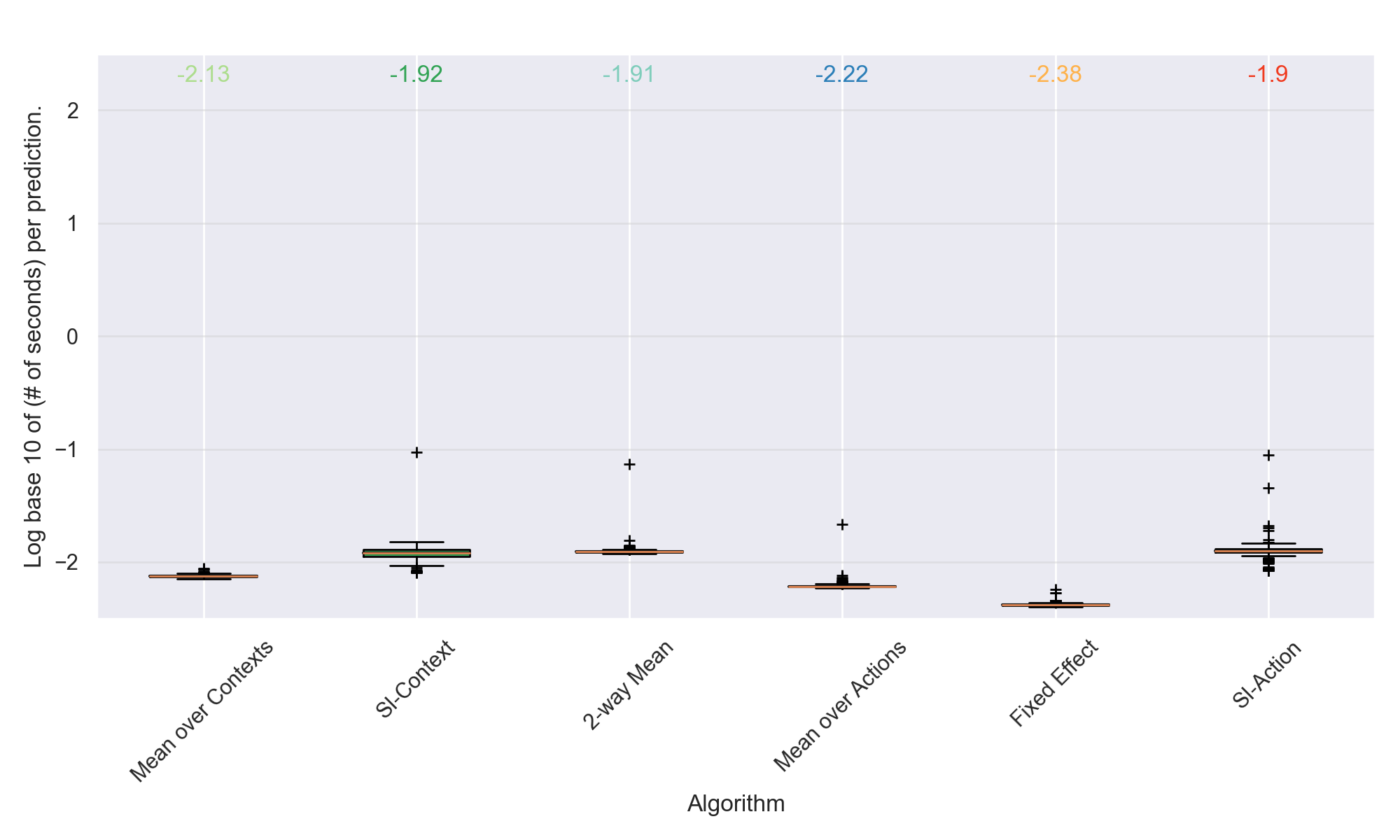}
    \caption{Computation Times of \SIA~and baselines. All methods are highly scalable, and \SIA~is only about 2x slower than the simple baselines.}
    \label{fig:time}
    \vspace{0.6cm}
\end{figure}

\section{Results on Single Samples}\label{app:unaveraged-results}

Figure~\ref{fig:unaveraged-results} shows the results of using \SIA~on \textit{unaveraged} data. In particular, for each cell type and compound, we select a single corresponding sample at random. 

As described in Figure~\ref{sec:synthetic-interventions}, we can apply some $\ME$ to de-noise our observations prior to regression. 
To this end, we consider the hard singular value thresholding (HSVT) estimator at energy level $\rho = 0.95$, that is, given the singular value decomposition $X = U \Sigma V^\top$, with singular values in decreasing order, we find that first $k$ such that $\sum_{i=1}^k \Sigma_{ii}^2 \geq \rho \| X \|_F^2$, and use $\text{HSVT}(X, k) = \sum_{i=1}^k \Sigma_{ii} U_i V_i^\top$.

We see that, as predicted by theory, matrix estimation improves the predictions on unaveraged data (\SIA~vs. \SIA-HSVT). However, the results still do not match the performance of the simple mean-over-actions baseline.

Thus, prior to prediction, we perform the subspace inclusion hypothesis test described in \rref{sec:synthetic-interventions}. In particular, if $\widehat{\tau} \geq 0.1 \cdot \rank(\bX_\test)$, then we reject the hypothesis test, concluding that the SI method is unlikely to work. If the test passes, we use the \SIA-HSVT predictor; if it fails, we instead use the mean-across-actions predictor as a strong ``fallback" option.

We see that adding this hypothesis test (\SIA-HSVT, +test) returns us to the performance level of the mean-across-actions baseline.

\begin{figure*}[h]
    \centering
    \includegraphics[width=.5\textwidth]{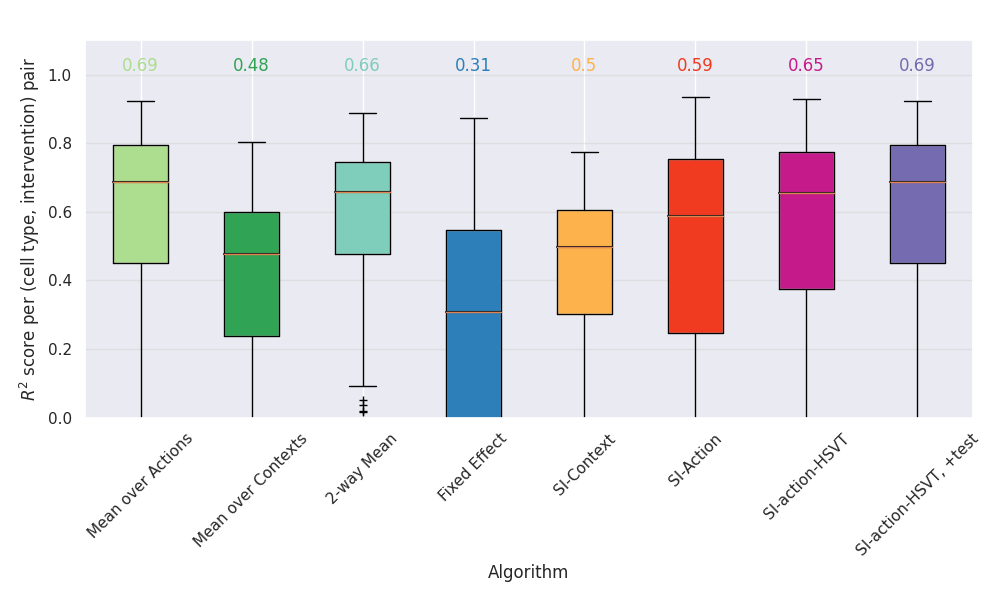}
    \caption{Performance of causal imputation algorithms on recovering the effects of compounds in the CMAP dataset, using \textit{unaveraged} data.}
    \label{fig:unaveraged-results}
\end{figure*}

\end{document}